\documentclass[11pt, a4paper]{article}

\usepackage[utf8]{inputenc}
\usepackage{amsmath}
\usepackage{amsthm}
\usepackage{mathtools}
\usepackage{braket}
\usepackage{amssymb}
\usepackage{geometry}
\usepackage{csquotes}
\usepackage{fancyhdr}
\usepackage{comment}
\usepackage[toc,page]{appendix}
\usepackage[english]{babel}
\usepackage{hyperref}
\usepackage[export]{adjustbox}
\usepackage{caption}
\usepackage[usestackEOL]{stackengine} 
\usepackage{makecell}
\usepackage[boxsize=0.5em,aligntableaux=center]{ytableau}
\usepackage[backend=biber,sorting=none,doi=false,isbn=false,url=false,style=ieee,citestyle=numeric-comp]{biblatex} 

\allowdisplaybreaks 
\geometry{margin=1.1in}
\setlength{\parindent}{0pt}
\setlength{\parskip}{0.5em}
\setlength{\marginparwidth}{2cm}

\captionsetup{ 
    labelfont=bf,        
    font=small,          
}

\setlength{\headheight}{14pt}
\numberwithin{equation}{section}

\usepackage[textsize=tiny,color=green]{todonotes} 

\addbibresource{bib_bootstrap.bib}


\newtheoremstyle{named}{}{}{\itshape}{}{\bfseries}{.}{.5em}{\thmnote{#3 }#1}
\theoremstyle{named}
\newtheorem*{namedtheorem}{Lemma}

\def\title#1{\centerline{\LARGE\bf\Longstack{#1}}\vskip .5em}

\def\bea{\begin{eqnarray}}
\def\eea{\end{eqnarray}}

\def\cO{ \mathcal{O} }
\def\cA{ \mathcal{A} }
\def\cB{ \mathcal{B} }
\def\cH{ \mathcal{H} }
\def\cI{ \mathcal{I} }


\DeclareMathOperator{\sech}{sech}

\newcommand{\eqnverty}{\begin{ytableau}
  \ \\
  \ \\
  \ \\
\end{ytableau} }

\newcommand{\eqnhoriy}{\begin{ytableau}
  \ & \none[\cdot] & \
\end{ytableau} }

\newcommand{\eqndiagy}{\begin{ytableau}
  \none[] & \none[] & \ \\
   \none[] & \ &  \none[] \\
  \ &  \none[] &  \none[]
\end{ytableau} }

\newcommand{\eqnrighty}{\begin{ytableau}
   \none[] & \ &  \ \\
  \ &  \ &  \none[]
\end{ytableau} }

\newcommand{\eqnupy}{\begin{ytableau}
  \none[] & \ \\
  \ &  \ \\
   \ &  \none[]
\end{ytableau} }


\begin{document}

\begin{flushright}
QMUL-PH-24-02\\
\end{flushright}

\bigskip
\bigskip

\title{Quantum mechanical bootstrap on the interval: \\
obtaining the exact spectrum }

\bigskip
\bigskip

\centerline{\bf Lewis Sword${}^{\dagger}$ and David Vegh${}^{\dagger \dagger}$}

\bigskip

\begin{center}

\small{
{ \it Centre for Theoretical Physics, Department of Physics and Astronomy \\
Queen Mary University of London, 327 Mile End Road, London E1 4NS, UK}}

\medskip
{\it Email:} ${}^{\dagger}$\texttt{l.sword@qmul.ac.uk}, ${}^{\dagger \dagger}$\texttt{d.vegh@qmul.ac.uk}

\bigskip
\bigskip
\centerline{ \it \today}

\end{center}

\begin{abstract}

We show that for a particular model, the quantum mechanical bootstrap is capable of finding exact results. We consider a solvable system with Hamiltonian $H=SZ(1-Z)S$, where $Z$ and $S$ satisfy canonical commutation relations. While this model may appear unusual, using an appropriate coordinate transformation, the Schr{\"o}dinger equation can be cast into a standard form with a P{\"o}schl--Teller-type potential. Since the system is defined on an interval, it is well-known that $S$ is not self-adjoint. Nevertheless, the bootstrap method can still be implemented, producing an infinite set of positivity constraints. Using a certain operator ordering, the energy eigenvalues are only constrained into bands. With an alternative ordering, however, we find that a finite number of constraints is sufficient to fix the low-lying energy levels exactly.
\end{abstract}

\tableofcontents


\section{Introduction}
The quantum mechanical bootstrap provides a method to numerically approximate the expectation values of a given system \cite{Han:2020bkb}. Initially proposed as a way to solve random matrix models \cite{Lin:2020mme}, numerous systems have now been analysed with the bootstrap procedure including: the quantum anharmonic oscillator which produces rapidly converging energy vs. position expectation value ``islands" with increased bootstrap matrix size \cite{Han:2020bkb,Bhattacharya:2021btd,Aikawa:2021qbl}, the Mathieu problem with band behaviour \cite{Berenstein:2021loy}, PT-symmetric systems \cite{Khan:2022uyz} and even models with exponentiated canonical operators as found in Calabi-Yau discussions \cite{Du:2021hfw}. For a selection of alternative systems and further explorations within the literature, see \cite{Berenstein:2021dyf,Nancarrow:2022wdr,Berenstein:2022ygg,Berenstein:2022unr,Hu:2022keu,Aikawa:2021eai}.

In this paper, we use the bootstrap construction to constrain the energy eigenvalues of a system defined on the interval. Remarkably, the constraints are sufficiently strong to fix these energy eigenvalues exactly. We begin by discussing the Hamiltonian, its analytic solutions and the boundary conditions in \S \ref{s:model}. The bootstrap is then reviewed in \S \ref{s:bootstrap}. Here we describe the method to find recursion relations and assess the associated anomalies.  We then detail the composition of the bootstrap matrices that are built from such relations. In \S \ref{s:numerical_results} we provide the numerical results and figures depicting both the band-like and exact behaviour. Finally, \S \ref{s:conclusion} provides a summary of the findings and suggests future directions of investigation. The appendices provide additional technical details of the calculations.


\section{The model}
\label{s:model}

\subsection{The Hilbert space}
\label{ss:the_hilbert_space}

We take the Hilbert space $\cH$ to be the space of square integrable functions over the interval $z \in [0,1]$. The inner product is defined on $\cH$ as
\begin{equation}
\langle\phi| \psi \rangle = \int_{0}^{1}  \phi(z)^{*} \psi(z) dz\,.
\end{equation}
Note that this also defines a product on a larger function space consisting of potentially non-square integrable functions. The norm on $\cH$ is defined by $||\psi|| = \sqrt{\braket{\psi | \psi}}$.

Let us consider a densely defined linear operator $A: D(A) \rightarrow \cH$. The domain and the action of $A^{\dagger}$ are defined by 
\begin{enumerate}
\item $D(A^{\dagger}):= \{\phi \in \cH | \exists \eta \in \cH : \forall \alpha \in D(A): \langle \phi | A \alpha \rangle = \langle \eta | \alpha \rangle \} $
\item $  A^{\dagger} \phi := \eta $
\end{enumerate}
Self-adjoint operators must satisfy two properties:
\begin{enumerate}
\item Symmetricity: $\langle A \phi| \psi \rangle -\langle \phi| A \psi \rangle =0 \,,$
\item Equality of operator domains:
$D(A) = D(A^{\dagger})\,\,,$
\end{enumerate}
for arbitrary wavefunctions $\phi \in D(A^{\dagger}), \psi \in D(A)$. In the following we consider the canonical operators $S$ and $Z$. In the $z$-basis, $S = i \hbar \partial_{z}$ (note the sign choice) and $Z = z \cdot$ satisfy the commutation relation $[S, Z] = i\hbar$, and henceforth we take $\hbar=1$. It is well-known that the operator $S$ is not self-adjoint on the interval\footnote{There does exist a self-adjoint extension of $S$ on the interval, see for example \cite{Juric:2021psr} where the wavefunctions are restricted to have boundary values identified up to a parameter dependent phase. Further information about extensions on the interval can be found in \cite{Al-Hashimi:2021tkf,PhysRevResearch.3.L042008}.}.

\subsection{The Hamiltonian}
We consider the Hamiltonian \cite{Vegh:2023snc}
\begin{equation}
\label{eqn:model_ham_in_s_z}
H = SZ(1-Z)S\,,
\end{equation}
In the $z$-basis, the time-independent Schr{\"o}dinger equation becomes
\begin{equation}
\label{eqn:model_ham_equation}
H \psi(z) = E \psi(z) \,, \quad z(1-z)\psi''(z) + (1-2z)\psi'(z) +E \psi(z) = 0\,,
\end{equation}
where $E$ is the eigenvalue of the system for the particular energy eigenfunction $\psi(z)$. We want $H$ to be self-adjoint and therefore we consider it on the dense domain
\begin{equation}
\label{eqn:model_dom_h}
D(H) := \{ \psi, H \psi \in \cH | \lim_{z \to 0, 1} z(1-z) \partial_{z}\psi(z) = 0 \} \,.
\end{equation}
$H$ must be symmetric, which means for any $\phi \in \cH$,  $H\phi \in \cH$ and $\psi \in D(H) \subset \cH$
\begin{equation}
\label{eqn:model_symmetricity_eqn}
\langle \phi| H \psi \rangle - \langle H \phi| \psi \rangle  =0 \,.
\end{equation}Note that $\phi$ is not necessarily in $D(H)$, therefore what is meant by $H\phi$ is that the differential operator $H$ acts on $\phi(z)$. Since $|| H \phi|| < \infty$, $\phi$ must at most be logarithmically divergent and subleading terms are either constant or vanish faster than $z^{1/2}$ or $(1-z)^{1/2}$:
\begin{align}
\label{eqn:model_norm_cond_on_phi}
&\phi(z)  = c_{0} + c_{0}'\log{\left(\frac{z}{1-z}\right)} + \cO \left(z^{\frac{1}{2}}\right) \,, \text{ as } z \to 0
\\
& \phi(z)  = c_{1} + c_{1}' \log{\left(\frac{z}{1-z}\right)} + \cO \left((1-z)^{\frac{1}{2}}\right) \,, \text{ as } z \to 1
\end{align}
where $c_{0}, c_{0}',c_{1}$ and $c_{1}'$ are constants. Since $\psi \in D(H)$, the condition in \eqref{eqn:model_dom_h} eliminates the logarithmically divergent terms,
\begin{align}
\label{eqn:model_norm_cond_on_psi_z0}
&\psi(z)  = c_{0}'' + \cO \left(z^{\frac{1}{2}}\right) \,, \text{ as } z \to 0
\\
\label{eqn:model_norm_cond_on_psi_z1}
&\psi(z)  = c_{1}'' + \cO \left((1-z)^{\frac{1}{2}}\right) \,, \text{ as } z \to 1
\end{align}
where $c_{0}''$ and $c_{1}''$ are constants. To perform the symmetricity calculation of \eqref{eqn:model_symmetricity_eqn}, we focus on its first term and integrate by parts
\begin{align}
\label{eqn:model_h_self_adj_bound_term}
\braket{\phi | H \psi}
&= -\int^{1}_{0} \phi(z)^{*} \partial_{z}\left[ z(1-z)\partial_{z} \psi(z) \right]  dz
\\
\label{eqn:model_terms_to_inf_gbc}
&= -\phi(z)^{*} z(1-z) \partial_{z} \psi(z)\bigg|_{z \to 0}^{z \to 1} + \int_{0}^{1} (\partial_{z} \phi(z)^{*}) z(1-z) \partial_{z}\psi(z) dz
\\
\begin{split}
&= -\phi(z)^{*} z(1-z) \partial_{z} \psi(z)\bigg|_{z \to 0}^{z \to 1} + \Bigg( \psi(z)  z(1-z)  \partial_{z} \phi(z)^{*}\bigg|_{z \to 0}^{z \to 1}
\\
& \qquad \qquad  \qquad \qquad  \qquad \qquad  \qquad \qquad \quad - \int_{0}^{1} \partial_{z} \left[z(1-z) \partial_{z}\phi(z)^{*} \right] \psi(z) dz \Bigg)
\end{split}
\\
\label{eqn:model_bts_hphi_psi}
&=  -\phi(z)^{*} z(1-z) \partial_{z} \psi(z)\bigg|_{z \to 0}^{z \to 1} + \psi(z)  z(1-z)  \partial_{z} \phi(z)^{*} \bigg|_{z \to 0}^{z \to 1} + \langle H\phi| \psi \rangle \,,
\end{align}
Since $\phi^{*}$, $\partial_{z} \psi$ and $\partial_{z} \phi^{*}$ may potentially blow up, these boundary terms must be evaluated using limits $\lim_{z \to 0,1}$. Each of the expressions in \eqref{eqn:model_h_self_adj_bound_term}-\eqref{eqn:model_bts_hphi_psi} must be finite. Since $\psi \in \cH$, then $\langle H\phi| \psi \rangle$ meets this criteria. The first term in \eqref{eqn:model_bts_hphi_psi} always vanishes. The second term is finite owing to the $z(1-z)$ factor, but generally does not vanish. A natural choice of conditions that ensure it does vanish is
\begin{equation}
\label{eqn:model_bcs}
\qquad \lim_{z \to 0, 1} z(1-z) \partial_{z}\phi(z) = 0\,.
\end{equation}
The set of $\phi$'s which satisfy \eqref{eqn:model_bcs} determines $D(H^{\dagger})$. We conclude that $H$ is a self-adjoint operator since it is symmetric and $D(H) = D(H^{\dagger})$.
Although we will use the boundary conditions of \eqref{eqn:model_dom_h} and \eqref{eqn:model_bcs} throughout, we note that more general boundary conditions do exist, see \S \ref{ss:gen_bound_con}.

\subsection{The spectrum}
\label{ss:spectrum}
The Schr{\"o}dinger equation \eqref{eqn:model_ham_equation} has a general analytic solution
\begin{equation}
\label{eqn:model_gen_sol}
\psi_{n}(z) = a_{1} P_{n} \left(  2z-1 \right) + a_{2} Q_{n} \left( 2z-1 \right) \,,
\end{equation}
where $a_{1}$ and $a_{2}$ are constants, $P_{n}$  is the Legendre polynomial, $Q_{n}$ is the Legendre function of the second kind and $n = \frac{1}{2} \left(-1 + \sqrt{1+ 4 E} \right)$. Note that $P_{n}$ and $Q_{n}$ are real.

Evaluating the boundary conditions
\begin{equation}
\label{eqn:model_bc_psi}
\lim_{z \to 0,1} z(1-z) \psi'(z)=0
\end{equation}
using the general solution \eqref{eqn:model_gen_sol} implies that $n=0,1,2 \dots$ and $a_{2}=0$. The non-negative integer $n$ provides the quantisation condition for the energy eigenvalues
\begin{equation}
\label{eqn:model_quant_cond}
\braket{\psi_{n}| H| \psi_{n}} \equiv E_{n} = n(n+1) \,,
\end{equation}
where
\begin{equation}
\label{eqn:model_analytic_sol}
\psi_{n}(z) = \sqrt{2n +1} P_{n} \left(2z-1 \right) \,,
\end{equation}
and constant $a_{1}$ has been fixed by normalisation condition
$\langle \psi_{n} | \psi_{n} \rangle = 1$. Having the analytic solution at hand provides a useful diagnostic tool for the bootstrap.


\subsection{P{\"o}schl-Teller coordinates}
\label{ss:po_tell_coords}
To understand where the P{\"o}schl-Teller potential is manifest in our system, we can choose alternative coordinate
\begin{equation}
\label{eqn:model_z_to_p_coord}
z = \frac{e^{p}}{1+ e^{p}}\,,
\end{equation}
such that the Schr{\"o}dinger equation becomes
\begin{equation}
\label{eqn:model_diff_eqn_in_p}
\psi''(p) + \frac{E}{4 \cosh^{2}\left(\frac{p}{2} \right)}\psi(p) = 0 \,.
\end{equation}	
This is related to the work of \cite{Khan:2022uyz} where they studied equation
\begin{equation}
\label{eqn:model_khan_diff_eqn}
\psi''(p) + \frac{\lambda(\lambda+1)}{4 \cosh^{2}\left(\frac{p}{2}\right)} \psi(p) = -\frac{E_{K}}{2} \psi(p)
\end{equation}
and applied the bootstrap to identify the eigenvalues $E_{K}$. Notably, by taking $E_{K} = 0$ and $\lambda(\lambda+1)=E$ the two equations coincide\footnote{There appears to be a typo in the original recursion relation proposed in \cite{Khan:2022uyz}. We find the relation to be  
\begin{align*}
 \tiny &\left(4 E_{K} t + \frac{t^3}{2} \right)\langle \sech^{t}(x)\rangle + \left( -4 E_{K}(t+1) + 2(t+1)\lambda(1+\lambda) -t^3 -3t^2 -4t -2 \right)\langle \sech^{t+2}(x)\rangle 
 \\
 &\qquad \qquad \qquad \qquad \qquad \qquad \qquad \qquad \qquad   + \left( \frac{t^3}{2} + 3t^2 + \frac{11}{2}t +3 -2(t+2)\lambda(\lambda+1)\right) \langle \sech^{t+4}(x)\rangle = 0 \,,
\end{align*}
where $x= p/2$. Note that upon scaling/redefining $E_{K} \to \frac{1}{2} E_{K}$ and swapping the $\langle \sech^{t+2}(x)\rangle $ and $\langle \sech^{t+4}(x)\rangle $ coefficients in the above expression, this is equal to equations (2.5) and (4.25) of \cite{Khan:2022uyz}.}. Under this coordinate transformation, the boundary condition of equation \eqref{eqn:model_bc_psi} becomes
\begin{equation}
\label{eqn:model_neumann_bc}
\lim_{p\to -\infty, \infty} \psi'(p) = 0
\end{equation}
which is a Neumann-type boundary condition. Another transformation is achieved by setting
\begin{equation}
\label{eqn:model_z_to_theta}
z = \frac{1}{2}\left(1+\sin(\theta)\right) \,.
\end{equation}

The Schr{\"o}dinger equation \eqref{eqn:model_ham_equation} then becomes
\begin{equation}
\label{eqn:model_schro_in_theta}
\psi''(\theta) -\tan(\theta) \psi'(\theta) + E \psi(\theta) = 0 \,,
\end{equation}
which, upon redefining the wavefunction and the energy to
\begin{equation}
\psi(\theta)= \frac{1}{\sqrt{\cos{\theta}}}\tilde{\psi}(\theta) \,, \qquad E = \tilde{E}-\frac{1}{4} \,,
\end{equation}
may be rewritten as
\begin{equation}
-\tilde{\psi}''(\theta) -\frac{1}{4 \cos^{2}(\theta)}\tilde{\psi}(\theta) = \tilde{E} \tilde{\psi}(\theta) \,.
\end{equation}


\subsection{General boundary conditions}
\label{ss:gen_bound_con}

The expectation values $\braket{\phi|H \psi}$ and $\braket{H\phi| \psi}$ in \eqref{eqn:model_symmetricity_eqn} remain well-defined when both $\psi$ and $\phi$ have logarithmic divergences.  However, upon integrating by parts the boundary term and resultant integral in \eqref{eqn:model_terms_to_inf_gbc} will become infinite as $z \to 0,1$. If we introduce a small $\epsilon$ cut-off and change the Hilbert space to the space of square integrable functions over the interval $z \in [\epsilon, 1-\epsilon]$, then the expressions stay well-defined.  Using wavefunction ans{\"a}tze
\begin{align}
\label{eqn:model_genbc_norm_cond_on_phi}
&\phi(z)  = d_{0} + d_{0}' \log{\left(\frac{z}{1-z}\right)} + \cO \left(z^{\frac{1}{2}}\right) \,, \text{ as } z \to \epsilon
\\
& \phi(z)  = d_{1} + d_{1}' \log{\left(\frac{z}{1-z}\right)} + \cO \left((1-z)^{\frac{1}{2}}\right) \,, \text{ as } z \to 1-\epsilon
\end{align}
where $d_{0}, d_{0}',d_{1}$ and $d_{1}'$ are constants and
\begin{align}
\label{eqn:model_genbc_norm_cond_on_psi_z0}
&\psi(z)  = d_{0}'' + d_{0}''' \log{\left(\frac{z}{1-z}\right)} + \cO \left(z^{\frac{1}{2}}\right) \,, \text{ as } z \to \epsilon
\\
\label{eqn:model_genbc_norm_cond_on_psi_z1}
&\psi(z)  = d_{1}'' + d_{1}''' \log{\left(\frac{z}{1-z}\right)}  + \cO \left((1-z)^{\frac{1}{2}}\right) \,, \text{ as } z \to 1-\epsilon
\end{align}
where $d_{0}'', d_{0}''', d_{1}''$ and $d_{1}'''$ are constants, the general boundary conditions can be written as
\begin{align}
\label{eqn:model_gen_bound_z0}
&\left[ \alpha_{0} z(1-z) \psi'(z) + \psi(z)-z(1-z)\log{\left(\frac{z}{1-z} \right)}\psi'(z) \right]\bigg|_{z = \epsilon} = 0 \,,
\\
\label{eqn:model_gen_bound_z1}
&\left[\alpha_{1} z(1-z) \psi'(z) + \psi(z)-z(1-z)\log{\left(\frac{z}{1-z} \right)}\psi'(z) \right]\bigg|_{z = 1- \epsilon} = 0 \,,
\end{align}
where $\alpha_{0,1}$ are constants and these same boundary conditions are also applied to $\phi(z)$. By plugging the functions \eqref{eqn:model_genbc_norm_cond_on_phi}-\eqref{eqn:model_genbc_norm_cond_on_psi_z1} into \eqref{eqn:model_gen_bound_z0} and \eqref{eqn:model_gen_bound_z1} respectively, this enforces 
\begin{equation}
d_{0}  +\alpha_{0} d_{0}' = d_{1} + \alpha_{1} d_{1}' = d_{0}''  +\alpha_{0} d_{0}''' = d_{1}'' + \alpha_{1} d_{1}''' = 0 \,,
\end{equation}
in the $\epsilon \to 0$ limit. Providing $\alpha_{0} \in \mathbb{R}$ and $\alpha_{1} \in \mathbb{R}$, these equations mean the boundary terms are equal in the symmetricity calculation \eqref{eqn:model_bts_hphi_psi} and will cancel, thus $H$ is self-adjoint. Note, in the $p$ coordinates of \S\ref{ss:po_tell_coords}, the boundary conditions become
\begin{align}
& \psi(p) + (\alpha_{0} - p ) \psi'(p)\big|_{p = a_{-}} = 0
\\
&\psi(p) + (\alpha_{1} - p) \psi'(p)\big|_{p = a_{+}} = 0
\end{align}
where $a_{-} = \log{\frac{\epsilon}{1-\epsilon}}$ and $a_{+} = \log{\frac{1-\epsilon}{\epsilon}}$. These are similar to Robin boundary conditions.


\section{Bootstrap}
\label{s:bootstrap}

In the simplest cases, the bootstrap begins with a Hamiltonian $H$ and a set of operators, $\{\cO_{i} \}$, where the type and the number of operators required, depends on the model. These are used to produce energy and commutation equations
\begin{equation}
\langle H \cO_{i} \rangle - E \langle \cO_{i} \rangle = 0 \qquad \text{and} \qquad \langle [H, \cO_{i}] \rangle =0 \,.
\end{equation}
Note that $\langle \cO_{i} \rangle \equiv \bra{\psi_{E}} \cO_{i} \ket{\psi_{E}}$, where $\ket{\psi_{E}}$ is a specific energy eigenstate. After using the commutation rules between the various required $\cO_{i}$ and $H$, it is potentially possible to form closed recursion relations between expectation values, based on the contents of the $\cO_{i}$. A subset of these expectation values therefore generate the remaining moments and together with the energy $E$, they form a set of \textit{initial data} for the bootstrap. In quantum mechanics, a general operator $\cO$  must adhere to the positivity constraint
\begin{equation}
\label{eqn:boot_bootstrap_constraint}
\langle  \cO^{\dagger} \cO \rangle \geq 0 \,.
\end{equation}
The authors of \cite{Han:2020bkb} showed that by writing the constraint as a positive semi-definiteness condition on a Hermitian matrix
\begin{equation}
\label{eqn:boot_boot_mat_intro}
\cB \succeq 0 \,,
\end{equation}
the expectation values of a system can be bounded. We call $\cB$ the bootstrap matrix and note that $\cB \succeq 0$ is equivalent to $\cB$'s eigenvalues being non-negative. The bootstrap operator $\cO$ used to create $\cB$ is often based on one of the aforementioned $\cO_{i}$, however it can be a general combination from the set, as long as the recursion relation allows one to obtain all the $\cB$ matrix elements. Therefore, via these relations, we populate this matrix with different choices of initial data and test the positivity. If the result is negative, then the trialled initial data is ruled out. By reiterating the process for different initial data sets we aim to carve out regions in the initial data space that satisfy the constraint, indicating where the actual, physical values exist. Increasing the matrix size by extending\footnote{``Extending" refers to increasing the range of $i$ when writing the operator as a linear combination $\cO = \sum_{i} c_{i} \tilde{\cO}_{i}$, where $\tilde{\cO}_{i}$ are selected operators. This translates to an increase in size of the $\cB$ matrix. In theory, one can always increase the range of $i$ which is why the set of constraints are potentially infinite.} $\cO$ can lead to convergence in some cases \cite{Du:2021hfw} and as such, the size of the matrix provides a parameter labelled ``$K$", for a $K \times K$ bootstrap matrix. The remainder of this section provides explicit details on the recursion relations, as well as the different types of matrix construction we choose to explore.

\subsection{Recursion relations}
\label{ss:rec_rels}
We define general 2d moments
\begin{equation}
f_{\sigma,\zeta} = \langle S^\sigma Z^\zeta \rangle\,,
\end{equation}
as well as 1d moments
\begin{equation}
f_{\zeta} \equiv f_{0,\zeta}  = \langle Z^\zeta \rangle \,,  \qquad \tilde{f}_{\sigma} \equiv f_{\sigma, 0} = \langle S^{\sigma} \rangle \,,
\end{equation}
and, as in \cite{Han:2020bkb}, find recursion relations between them. Here and in the rest of the paper, $\sigma$ and $\zeta$ are integers. We begin by inserting operator $\cO_{e}(S,Z) = S^\sigma Z^\zeta$ into the two possible forms of energy equation and upon completing the various commutations, we arrive at
{\small \begin{align}
\label{eqn:boot_2d_ho_eqn}
\begin{split}
\langle H \cO_{e} \rangle - E \langle \cO_{e} \rangle &=
\left(\sigma(\sigma+1) - E\right)f_{\sigma,\zeta} - i (\sigma+1)f_{\sigma+1,\zeta} + 2i (\sigma+1)f_{\sigma+1,\zeta+1} + f_{\sigma+2,\zeta+1}-f_{\sigma +2, \zeta+2} = 0 \,.
\end{split}
\end{align}}
{\small \begin{equation}
\label{eqn:boot_2d_oh_eqn}
\langle \cO_{e}  H  \rangle - E \langle \cO_{e} \rangle =
-\zeta^2 f_{\sigma,\zeta-1} + (\zeta^2 +\zeta -E)f_{\sigma, \zeta}-i(2\zeta +1)f_{\sigma+1, \zeta} + 2i(\zeta+1) f_{\sigma +1, \zeta+1} + f_{\sigma +2, \zeta+1} - f_{\sigma+2,\zeta+2} =0 \,.
\end{equation}}With these, we can construct $\langle [H,\cO_{e}]\rangle = 0$ and in turn find equations that relate the moments in this 2-dimensional $(\sigma, \zeta)$ space. We find
\begin{equation}
\label{eqn:boot_up_eqn}
\text{\eqnupy } : \quad -\zeta^2 f_{\sigma,\zeta-1} - (\sigma -\zeta)(1+\sigma +\zeta)f_{\sigma,\zeta} + i(\sigma -2 \zeta)f_{\sigma+1,\zeta} + 2i (\zeta-\sigma)f_{\sigma+1,\zeta+1} = 0 
\end{equation}
\begin{align}
\label{eqn:boot_right_eqn}
\begin{split}
\text{\eqnrighty } : &\quad -2(\sigma -\zeta)(E-\sigma(\sigma+1))f_{\sigma,\zeta} - i(1+ \zeta^2 - 2\zeta \sigma +2\sigma(\sigma+1))f_{\sigma+1,\zeta} 
\\
& \qquad \qquad \qquad \qquad + i(\sigma-\zeta)(1-\zeta +3\sigma)f_{\sigma+1,\zeta+1} + (\sigma+1)f_{\sigma+2,\zeta+1} = 0 
\end{split}
\end{align}
These two equations relate general 2d moments in the $(\sigma, \zeta)$ space. We choose to denote these recursion relations by block symbols based on how their constituent moments appear on  the $(\sigma, \zeta)$ plane. \eqnupy is simply the commutator $\langle [H, \cO_{e}] \rangle = 0$. \eqnrighty is generated by shifting indices in \eqnupy by $(\sigma, \zeta) \to (\sigma +1, \zeta +1)$, solving for $f_{\sigma +2, \zeta+2}$ and substituting this into \eqref{eqn:boot_2d_ho_eqn}. 

From equations \eqref{eqn:boot_up_eqn} and \eqref{eqn:boot_right_eqn} we can find three recursion relations which are one-dimensional i.e. each depends on a single index,
\begin{equation}
\label{eqn:boot_1d_rec_in_z}
\text{ \eqnverty} : \quad(\zeta-1)^3 f_{\zeta-2}+(2 \zeta-1) \left(2 E - \zeta^2+ \zeta \right) f_{\zeta-1}+ \zeta \left( \zeta^2 -4 E -1\right) f_{\zeta}=0
\end{equation}
\begin{equation}
\label{eqn:boot_1d_diag}
\text{ \eqndiagy } : \quad (\zeta^2 + \zeta -E )f_{\zeta,\zeta} + 2 i (\zeta+1) f_{\zeta+1,\zeta+1} - f_{\zeta+2,\zeta +2}=0
\end{equation}
\begin{equation}
\label{eqn:boot_1d_in_s_hori}
\text{\eqnhoriy } : \quad \sigma(4 E + 1 - \sigma^2)\tilde{f}_{\sigma} - (\sigma+1) \tilde{f}_{\sigma+2}=0
\end{equation}
Relation \eqnverty is found\footnote{For an alternative derivation, see Appendix \ref{app:drr}.} by forming the following linear combination of equations \eqref{eqn:boot_up_eqn} and \eqref{eqn:boot_right_eqn}
\begin{equation}
E \cdot \eqnupy \bigg|_{(\sigma=-1, \zeta)} -\frac{\zeta}{2} \cdot \eqnrighty \bigg|_{(\sigma=-1, \zeta)} + \frac{\zeta}{2}  \cdot \eqnrighty \bigg|_{(\sigma=-1,\zeta-1)} \,,
\end{equation}
then shifting the result by $\zeta \to \zeta-1$.
\eqndiagy is found by setting $\sigma = \zeta$ in \eqnupy , followed by shifting the indices $\zeta \to \zeta+1$ , solving for $f_{\zeta+2,\zeta+1}$ and then substituting this into equation \eqref{eqn:boot_2d_ho_eqn}, which also sets $\sigma = \zeta$. To obtain \eqnhoriy , we first solve \eqnupy at $\zeta=0$ and $\zeta=1$ for $f_{\sigma+1,1}$ and $f_{\sigma+1,2}$, then shift these expectation values by $\sigma\to \sigma+1$ to find $f_{\sigma+2,1}$ and $f_{\sigma+2,2}$. We then substitute these into equation \eqref{eqn:boot_2d_oh_eqn} at $\zeta =0$ and solve for $f_{\sigma +1,1}$. Finally, we insert $f_{\sigma+2,2}$, $f_{\sigma+2,1}$ and $f_{\sigma +1,1}$ into equation \eqref{eqn:boot_2d_ho_eqn}, evaluated at $\zeta=0$.

By setting $\zeta = 1$ in \eqnverty of \eqref{eqn:boot_1d_rec_in_z}, we find relation $f_{1}= \frac{1}{2}f_{0}$. Since we normalise using $f_{0,0}=f_{0} =\tilde{f}_{0}=1$, all higher order $f_{\zeta}$ moments are therefore defined by a single piece of initial data\footnote{One dimensional initial data spaces/search spaces have been encountered in previous studies: the Coulomb potential model in \cite{Berenstein:2021dyf} for instance.}: $\{E\}$. The other 1d recursion relations found, \eqndiagy and \eqnhoriy , have two dimensional search/initial data spaces: $\{E,f_{1,1}\}$ and $\{E,f_{1,0} \}$ respectively. Any positive quadrant ($\sigma \geq 0, \zeta \geq 0$), 2d moment $f_{\sigma, \zeta}$ can be generated using a selection of the five total recursion relations and will generally depend on the initial data: $\{E, f_{1,0}, f_{1,1} \}$.

As a preliminary check, we evaluate the moments $f_{2}$ using \eqnverty and compare this to the direct integration method. From the recursion relation, one obtains
\begin{equation}
f_{2} = \langle Z^2 \rangle = \frac{3E -2}{2(4E - 3)}f_{0} \,.
\end{equation}
Alternatively, using the analytic solution $\psi_{n}$ and calculating $\braket{\psi_{n}|Z^2| \psi_{n}}$ via integration, we achieve the same result upon applying $E= n(n+1)$.

\subsection{Restrictions on \texorpdfstring{$(\sigma,\zeta)$}{}}
\label{ss:octant_choice}

We choose to restrict the 2d $(\sigma, \zeta)$ space we explore, considering only a subset of $f_{\sigma, \zeta}$ moments. This ensures that the expectation values we use are finite. The three ``negative" quadrants in the $(\sigma, \zeta)$ space\footnote{Those quadrants defined by $( \sigma < 0, \zeta \geq 0)$, $(\sigma < 0, \zeta < 0 )$ and  $(\sigma \geq 0, \zeta < 0 )$.}, introduce operators with negative exponents, and when the associated operators have a well defined action on an eigenstate, they can produce non-convergent expectation values from analytic checks.
As a simple example, if we take $\sigma = 0, \zeta = -1$, then $f_{\sigma, \zeta}$ evaluated on some energy eigenstate $\psi$ is
\begin{equation}
f_{0, -1} = \braket{\psi| Z^{-1} | \psi} = \int_{0}^{1} \frac{1}{z} |\psi(z)|^2 dz = \infty \,,
\end{equation}
owing to the polynomial form of the eigenstate solutions $\psi$. To avoid such issues, we restrict the range to the positive quadrant, $(\sigma \geq 0, \zeta \geq 0)$. A recursion relation can be plotted on a lattice of $\zeta$ vs. $\sigma$ points, by placing a square at the pair of integers $(\sigma, \zeta)$ corresponding to every moment $f_{\sigma,\zeta}$ present in said relation. Figure \ref{fig:octant_with_eqns} demonstrates these equation plots as well as the forbidden (red) and permitted (dark and light green) regions for moments $f_{\sigma, \zeta}$ in the $(\sigma, \zeta)$ plane.
\begin{figure}[t!]
\begin{center}
\includegraphics[width=5.5cm, height=5.5cm]{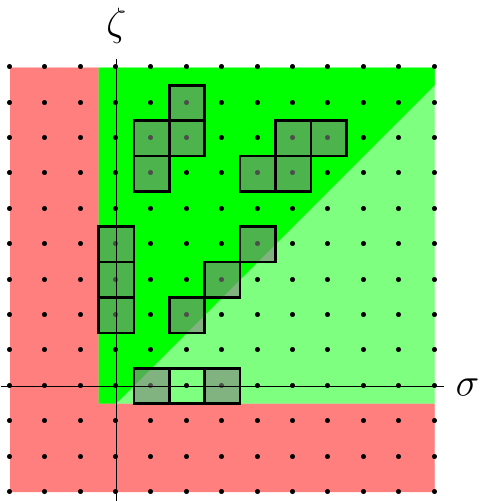}
\end{center}
\caption{A section of the ($\sigma, \zeta$) plane, displaying the lattice representations of the five recursion relations, based on the moments $f_{\sigma, \zeta} = \langle S^{\sigma} Z^{\zeta} \rangle$ they contain. The green regions ($\sigma \geq 0, \zeta \geq 0)$ cover $(\sigma, \zeta)$ pairs which define permitted $f_{\sigma, \zeta}$, while the red region represents the moments we do not explore. The darker green region ($\zeta \geq \sigma \geq 0 $) contains the $f_{\sigma, \zeta}$ that will be used in specific bootstrap matrices.  }
\label{fig:octant_with_eqns}
\end{figure}


\subsection{Anomalies}
\label{ss:anomalies}

Corrections to the recursion relations may occur due to anomalies \cite{Juric:2021psr,Esteve:2002nt} which can appear upon careful consideration of operator domains. Let us take an operator, $A$, for which $[H,A] = 0$. As seen from the definition of self-adjointness in \S\ref{ss:the_hilbert_space}, it is important to consider the domains of such operators, $D(A)$ and $D(H)$. Indeed, given that $\psi \in D(H)$, if equation
\begin{equation}
[H,A] \psi = (HA - A H) \psi = H(A \psi) - A (H\psi) \,,
\end{equation}
is to make sense, we must have $A \psi \in D(H)$ and $H\psi \in D(A)$. If $A \psi \notin D(H)$ then we say that $A$ does not leave the domain of $H$ invariant, the symmetry generated by $A$ is broken and an anomaly appears. The exact form of this commutator-based anomaly, expressed in terms of $A$ and $H$, can be found by deriving the Heisenberg equation in the Hamiltonian formalism
\begin{align}
\label{eqn:anom_heis_eqn}
\frac{d}{dt}\braket{\psi(t) | A \psi(t)} &= \braket{\frac{\partial \psi(t)}{\partial t} |A \psi(t)} + \braket{\psi(t)|\frac{\partial A}{\partial t} \psi(t)} + \braket{\psi(t)|A \frac{\partial \psi(t)}{\partial t}}
\\
&=\braket{-i H \psi(t) |A \psi(t)} + \braket{\frac{\partial A}{\partial t}}_{\psi} + \braket{\psi(t)|(-i) A H\psi(t)}
\\
\label{eqn:anom_heis_eqn_fin}
&= \braket{\frac{\partial A}{\partial t}}_{\psi} + i \braket{ H \psi(t) |A \psi(t)} - i \braket{\psi(t)| A H\psi(t)}\,.
\end{align}
Here $\psi(t)$ belongs to Hilbert space $\cH$, the Hamiltonian $H$ is self-adjoint with domain $D(H)$ and $A$ is a general operator. If we assume that $\psi \in D(H) \cap D(A)$ then $\braket{ H \psi(t) |A \psi(t)}$ is well-defined. However, this assumption does not cover the third term in \eqref{eqn:anom_heis_eqn_fin}, since it requires $H\psi \in D(A)$. If $[A,H]$ is well-defined in all of $\cH$, then we may make replacement $AH = [A,H] + HA$. Substituting this in, and using that $\braket{ H \psi(t) |A \psi(t)} \equiv \braket{ \psi(t) | H^{\dagger} A \psi(t)}$, we find
\begin{align} 	
\frac{d}{dt}\braket{\psi(t) | A \psi(t)} = \braket{\frac{\partial A}{\partial t}}_{\psi(t)} + i\braket{\psi(t) |[H,A] \psi(t)} + i \braket{\psi(t)|(H^{\dagger} - H) A\psi(t)} \,.
\end{align}
Therefore, we see that the Heisenberg equation receives a correction owing to the subtleties of the operator domains. It indicates that when computing the expectation value of a commutator, there is an additional contribution
\begin{equation}
\label{eqn:anom_corr_rec}
\braket{\psi |[H,A]  \psi}_{\text{total}} = \braket{\psi |[H,A] \psi}_{\text{reg}} + \cA_{A} \,.
\end{equation}
Here, $\braket{\psi | [H,A] \psi}_{\text{reg}}$ is the standard  algebraic commutator expectation value, where again, $[H,A]$ is defined in all of $\cH$, while the additional piece
\begin{equation}
\label{eqn:anom_anom_def}
\cA_{A}= \braket{\psi | (H^{\dagger} - H) A \psi} \,,
\end{equation}
is known as the anomaly. 

While we do not have a symmetry, we do have the commutator $\langle [H, \cO_{e}]\rangle = 0$ (where $\cO_{e} = S^\sigma Z^\zeta$), which may suffer anomalies. Since this is the only commutator needed to derive the recursion relations, we check if it receives anomalous corrections by calculating $\cA_{\cO_{e}}$.  Additionally, the states used in the anomaly calculations are energy eigenstates $\psi$, hence we use notation $\braket{\psi | A \psi } \equiv \langle A \rangle$.
Using integration by parts, the result for anomaly $\cA_{\cO_{e}}$ is
\begin{equation}
\label{eqn:anom_oe}
\cA_{\cO_{e}} = i^\sigma z(1-z) \left( \psi(z)^{*} \frac{\partial^{\sigma+1}}{\partial z^{\sigma +1}} (z^{\zeta} \psi(z)) - \psi'(z)^{*}\frac{\partial^{\sigma}}{\partial z^{\sigma }} (z^{\zeta} \psi(z)) \right) \bigg|_{0}^{1} \,.
\end{equation}
Unfortunately, we could not show that $\cA_{\cO_{e}}=0$ using the boundary conditions alone, but by employing the explicit energy eigenstate solutions, $\psi_{n}(z) = \psi_{n}(z)^{*} = \sqrt{2n+1} P_{n}(2z-1)$, it can be argued to vanish as follows. Since any derivative of $\psi_{n}(z)$ results in another polynomial, the contents of the large bracket in \eqref{eqn:anom_oe} will be a well-behaved polynomial in $z$, for any $n$. The factor of $z(1-z)$ outside the bracket then ensures that the total expression vanishes in the both limits $z \to 0$, $z \to 1$, therefore $\cA_{\cO_{e}} =0$.

As an aside, by following the alternative derivation of  \eqnverty in Appendix \ref{app:drr}, it is possible to show that this 1d relation is anomaly free using only the boundary conditions. The derivation utilises two commutator equations, namely
\begin{equation}
\label{eqn:anom_comm_eqns}
 \langle [H, \cO_{a}(Z)]\rangle =0 \,, \qquad \langle [H, \cO_{c}(S, Z)]\rangle =0
 \,,
\end{equation}
with operators $\cO_{a}= -\frac{i}{2} (\zeta-1) Z^{\zeta-1} (1-Z)$
 and $\cO_{c} = S Z^\zeta (1- Z)$. Therefore, by inserting these operators into \eqref{eqn:anom_anom_def}, we compute $\mathcal{A}_{\cO_{a}}$ and $\mathcal{A}_{\cO_{c}}$ using integration by parts
\begin{align}
\label{eqn:anom_oa}
\begin{split}
\cA_{\cO_{a}} = \frac{1}{2} i (\zeta -1)\bigg( z^{\zeta -1} (1-z) ( 1 - \zeta  (1-z)) |\psi (z)|^2 
+ z^{\zeta}(1-z)^2 \big(\psi'(z)^{*}\psi(z)- \psi(z)^{*}\psi'(z)  \big) \bigg)\bigg|_{0}^{1} \,,
\end{split}
\end{align}
\begin{align}
\begin{split}
\label{eqn:anom_oc}
\cA_{\cO_{c}} &= i\bigg( - z^{\zeta-1}(1-z)\left[Ez +\zeta(1+z -\zeta(1-z)) \right]|\psi(z)|^2 -z^{\zeta+1} (1-z)^{2} |\psi'(z)|^2 
\\
& \qquad  - z^{\zeta}(1-z)(1-2\zeta(1-z)) \psi(z)^{*} \psi'(z) + z^{\zeta}(1-z)(z-\zeta(1-z))\psi'(z)^{*} \psi(z) \bigg) \bigg|_{0}^{1}  \,,
\end{split}
\end{align}
Note, we have substituted the Schr{\"o}dinger equation into \eqref{eqn:anom_oc} to remove the second derivatives. Providing\footnote{Note that $\zeta \geq 1$ here simply ensures the operator $\cO_{a}$ contains no negative powers of $Z$. This is also relevant to the derived 1d recursion relation: under this choice, \eqnverty relates upper octant moments.} $\zeta \geq 1$ and that  evaluation at $z=0$, $z=1$ is exchanged for limits $z \to 0$, $z \to 1$ such that the boundary conditions of \eqref{eqn:model_bc_psi} can be applied, we find that $\cA_{\cO_{a}} = \cA_{\cO_{c}} = 0$.
Note here we have used that $\psi, \psi^{*}$ is finite according to equations \eqref{eqn:model_norm_cond_on_psi_z0} and \eqref{eqn:model_norm_cond_on_psi_z1}.
This reconfirms that the 1d recursion relation in $\langle Z^{\zeta} \rangle$-type moments does not receive anomalous corrections.

\subsection{Bootstrap matrices}
\label{ss:bootstrap_matrices}
This section details how the bootstrap matrices are constructed using the recursion relations, as well as the differences in the bootstrap operator ordering.


\subsubsection{One-operator matrix}
\label{sss:one_op_mat}
The 1d recursion relation in $\langle Z^{\zeta} \rangle$ moments, \eqnverty, provides the means to build a Hankel bootstrap matrix, $\mathcal{B}_{1d}$. We choose the bootstrap operator to be
\begin{equation}
\label{eqn:boot_b1d_o_def}
\cO= \sum_{\zeta \geq0} c_{\zeta} Z^{\zeta}
\end{equation}
with $c_{\zeta} \in \mathbb{C}$. Given $Z = Z^{\dagger}$, the elements of $\cB_{1d}$ defined from $\langle \cO^{\dagger} \cO \rangle = \sum_{\zeta, \zeta' \geq 0} c_{\zeta}(\cB_{1d})_{\zeta,\zeta'} c_{\zeta'} $ are
\begin{equation}
\label{eqn:boot_b1_mat_elemets}
(\mathcal{B}_{1d})_{\zeta \zeta'} = \left\langle \left(Z^{\zeta}\right)^{\dagger} Z^{\zeta'} \right\rangle = \langle Z^{\zeta+\zeta'} \rangle = f_{\zeta+\zeta'}  \,.
\end{equation}
The size of the matrix is hence determined by the order of the moments. For maximum index values $\zeta_{\text{max}}$ and $\zeta'_{\text{max}}$, we must obtain moments up to $f_{\zeta_{\text{max}}+\zeta'_{\text{max}}}$. The matrix size parameter is also defined in terms of these values, $K= \zeta_{\text{max}}+1 = \zeta'_{\text{max}}+1$. As an explicit example, choosing $(\zeta, \zeta') \in \{0,1,2\}$ (i.e. $K = 3$), we provide $\mathcal{B}_{1d}$ below populated with values obtained from the \eqnverty equation of \eqref{eqn:boot_1d_rec_in_z}.
\begin{equation}
\label{eqn:boot_b1_at_dim_4}
\mathcal{B}_{1d} = \begin{pmatrix}
f_{0} & f_{1} & f_{2} \\
f_{1} & f_{2} & f_{3} \\
f_{2} & f_{3} &f_{4} \\
\end{pmatrix}
= 
\begin{pmatrix}
1 & \frac{1}{2} & \frac{3 E-2 }{2(4E-3)} \\
\frac{1}{2} & \frac{3 E-2 }{2(4E-3)} & \frac{5 E-3}{4 (4 E-3)} \\
\frac{3 E-2 }{2(4E-3)} & \frac{5 E-3}{4 (4 E -3)} & \frac{5 E (7 E-30)+72 }{8 (4 E-15) (4 E-3)} \\
\end{pmatrix}
\end{equation}
where we have set $f_{0}=1$ as the chosen normalisation. Given that $Z$ is Hermitian, $\mathcal{B}_{1d}$ is real and symmetric and we also note that the matrix contains a single initial data, $E$. We explore the results of bootstrapping $\cB_{1d}$ at different matrix sizes in section \S \ref{s:numerical_results}.


\subsubsection{Two-operator matrix}
\label{sss:two_op_matrix}

The 2d bootstrap matrix constructed here, $\cB_{2d}$, features expectation values of products of $S$ and $Z$ operators. These two-operator type matrices have previously been encountered in the literature, see \cite{Du:2021hfw,Hu:2022keu,Morita:2022zuy}. As such, we use the following bootstrap operator
\begin{equation}
\label{eqn:boot_main_op}
\cO = \sum_{\sigma, \zeta \geq 0} c_{\sigma,\zeta} S^{\sigma}[Z(1-Z)]^{\zeta} \,.
\end{equation}
where $c_{\sigma,\zeta} \in \mathbb{C}$. The choice of $\cO$ comes from an educated guess by first looking at the Hamiltonian which contains $Z(1-Z)$ and secondly looking at P{\"o}schl-Teller potentials as seen by changing coordinates\footnote{In \cite{Khan:2022uyz}, they analysed a system with P{\"o}schl-Teller potential using coordinates proportional to the $p$ coordinate in \S\ref{ss:po_tell_coords}, to develop a 1d recursion relation. Their relation focussed on moments $\left\langle \sech^{2 \zeta}(p/2) \right\rangle$ which, under the coordinate transformation of \eqref{eqn:model_z_to_p_coord}, are proportional to $\langle [Z(1-Z)]^{\zeta}\rangle$. Therefore, extending this idea by introducing the simplest dependence on $S$, we arrive at $\cO = \sum_{\sigma, \zeta \geq 0} c_{\sigma,\zeta} S^{\sigma}[Z(1-Z)]^{\zeta}$.}. Since $\langle \cO^{\dagger} \cO \rangle \geq 0$, we express the corresponding elements of $\cB_{2d}$ as
\begin{align}
\begin{split}
\label{eqn:boot_ele_odag_o}
(\cB_{2d})_{(\sigma, \zeta),(\sigma', \zeta')}  &= \langle \left(S^{\sigma} [Z(1-Z)]^{\zeta} \right)^{\dagger} S^{\sigma'} [Z(1-Z)]^{\zeta'}\rangle
\\
&=\langle[Z(1-Z)]^{\zeta} (S^{\sigma})^{\dagger}  S^{\sigma'} [Z(1-Z)]^{\zeta'}\rangle
\end{split}
\end{align}
such that $\cB_{2d} \succeq 0$. Importantly, to evaluate this expression we must take care since $S$ is not self-adjoint and the recursion relations only contain $S$, not $S^{\dagger}$. The following lemma bypasses this issue.
\begin{namedtheorem}[``Dagger"]
\label{lem:boot_lemma_main_text}
For $\beta \geq \alpha \geq 0$, the following is true
\begin{equation}
\label{eqn:boot_lemma_rem_dag_eqn}
[Z(1-Z)]^{\beta}(S^{\alpha})^{\dagger} = [Z(1-Z)]^{\beta}S^{\alpha} \,.
\end{equation}
\end{namedtheorem}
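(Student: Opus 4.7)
The plan is to reduce the operator identity to a boundary-term calculation via integration by parts, and exploit the fact that $[Z(1-Z)]^\beta$ vanishes to order $\beta$ at both endpoints, which will kill every boundary contribution precisely when $\beta\geq\alpha$. The $\alpha=0$ case is trivial, so I will fix $\alpha\geq 1$ throughout.

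The first move is to test both sides against an arbitrary bra $\langle\phi|$ in a dense subspace. Since multiplication by the real function $[z(1-z)]^\beta$ is self-adjoint, the definition of the adjoint recasts the left-hand matrix element as
\begin{equation*}
\langle\phi|\,[Z(1-Z)]^\beta (S^\alpha)^\dagger\,\psi\rangle
=\langle S^\alpha\bigl([Z(1-Z)]^\beta\phi\bigr)\,|\,\psi\rangle.
\end{equation*}
With $S=i\partial_z$ in the $z$-basis I then write this as an integral and perform $\alpha$ integrations by parts to push the derivatives off $[z(1-z)]^\beta\phi$ and back onto $\psi$. The bulk term is by construction $\langle\phi|\,[Z(1-Z)]^\beta S^\alpha\psi\rangle$, so the identity reduces to showing that the $\alpha$ boundary contributions of the form $\bigl[\partial_z^{\alpha-k}\bigl([z(1-z)]^\beta\phi^*\bigr)\cdot\partial_z^{k-1}\psi\bigr]_{z=0}^{z=1}$, with $k=1,\ldots,\alpha$, all vanish. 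A direct Leibniz-rule count handles this: every summand of $\partial_z^m\bigl([z(1-z)]^\beta\phi\bigr)$ with $m\leq\alpha-1\leq\beta-1$ carries a factor of the form $z^{\beta-j}(1-z)^{\beta-(m-j)}$ for some $0\leq j\leq m$, and both exponents are at least one, so each such term dies in the limits $z\to 0,1$.

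The hard part will be handling the mildly singular behaviour of $\phi$ at the endpoints: in the general setup of \S\ref{ss:gen_bound_con}, $\phi$ can carry a logarithmic divergence as in \eqref{eqn:model_norm_cond_on_phi}, so $\partial_z^{m-j}\phi$ may blow up like $z^{-(m-j)}$ up to logarithms as $z\to 0$. The key observation is that the polynomial factor $z^{\beta-j}$ strictly dominates such at-most-log-enhanced power singularities whenever $\beta-j\geq 1$, which is exactly what the range $j\leq m\leq\beta-1$ guarantees; the endpoint values are to be read as the limits $z\to 0,1$ just as in the symmetricity argument of \S\ref{ss:the_hilbert_space}. The factors $\partial_z^{k-1}\psi$ pose no difficulty since $\psi$ lies in the smooth domain on which the recursion relations are formulated (and is polynomial for the energy eigenstates of interest). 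Once every boundary term is killed in this way and $\phi$ is arbitrary, the operator identity \eqref{eqn:boot_lemma_rem_dag_eqn} follows.
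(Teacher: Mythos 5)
Your proof follows the same scaffolding as the paper's (test both sides against a bra, integrate by parts $\alpha$ times, reduce to vanishing of boundary terms, invoke $\beta\geq\alpha$), but the mechanism you use to kill the boundary terms is genuinely different. The paper converts higher $z$-derivatives of \emph{both} $\phi^{*}$ and $\psi$ into zeroth and first derivatives by iterating the Schr\"odinger-equation substitution $\psi''=[z(1-z)]^{-1}\bigl((2z-1)\psi'-E\psi\bigr)$, so that each extra derivative beyond the first contributes a factor $[z(1-z)]^{-1}$, and then counts powers. You instead read the blow-up rate directly off the admissible boundary asymptotics of $\phi$ (the log-divergent tail gives $\partial_z^{n}\phi=O(z^{-n})$ up to logarithms) and pair it against the explicit vanishing of $\partial_z^{j}[z(1-z)]^{\beta}$ to order $\beta-j$ at each endpoint. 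Your route is more elementary in that it does not require $\phi,\psi$ to solve the differential equation, but it handles the $\psi$-side less carefully: you simply declare $\psi$ to be smooth/polynomial. That suffices for the energy eigenstates actually used in the bootstrap, but the paper's ODE trick controls $\partial_z^{k-1}\psi$ for the wider domain $D(H)$ as well (where $\psi\sim c+O(z^{1/2})$, so $\partial_z^{k-1}\psi\sim z^{3/2-k}$ for $k\geq 2$; your power count still closes if you track this, so it is worth spelling out rather than waving away).

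Two bookkeeping slips in the Leibniz count are worth fixing. First, the displayed ``factor of the form $z^{\beta-j}(1-z)^{\beta-(m-j)}$'' is not what a single Leibniz term of $\partial_z^{m}\bigl([z(1-z)]^{\beta}\phi\bigr)$ produces: the factor $\partial_z^{j}[z(1-z)]^{\beta}$ vanishes to order $\beta-j$ at \emph{both} endpoints, so the correct local behaviour is $\sim z^{\beta-j}$ near $0$ and $\sim(1-z)^{\beta-j}$ near $1$; the $(1-z)^{\beta-(m-j)}$ in your expression appears to be a typo. Second, the stated sufficient condition ``$\beta-j\geq 1$'' is too weak: after multiplying by $\partial_z^{m-j}\phi=O(z^{-(m-j)})$ the net power is $z^{\beta-m}$, so the real requirement is $\beta-m\geq 1$, i.e.\ $m\leq\beta-1$. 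You do also quote the range $m\leq\alpha-1\leq\beta-1$, so the conclusion $\beta\geq\alpha$ is right, but the intermediate justification as written does not support it.
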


The proof of this lemma is provided in Appendix \ref{app:lemma_proof_dag_remove}. It should also be noted that initial numerical checks of operator $\cO$ using the analytic solutions, encouraged investigation into the proof of \eqref{eqn:boot_lemma_rem_dag_eqn}. Upon applying the lemma, using the binomial expansion for $(1-Z)^{\zeta}$ and $(1-Z)^{\zeta'}$ then applying the McCoy formula \cite{mccoyref},  the element in \eqref{eqn:boot_ele_odag_o} becomes
\begin{align}
\begin{split}
\label{eqn:boot_odago_ele_expanded_mccoy}
(\cB_{2d})_{(\sigma,\zeta),(\sigma',\zeta')}
&= \sum_{\kappa=0}^{\zeta} \sum_{\kappa'=0}^{\zeta'} (-1)^{\kappa+\kappa'} \binom{\zeta}{\kappa}\binom{\zeta'}{\kappa'} \Bigg( \langle S^{\sigma + \sigma'} Z^{\zeta+\zeta'+\kappa+\kappa'} \rangle 
\\
& + \sum_{\lambda=1}^{\text{min}(\zeta+\kappa,\sigma+\sigma')} \frac{(-i)^{\lambda}(\sigma + \sigma')!(\zeta+\kappa)! }{\lambda! (\sigma+\sigma' - \lambda)! (\zeta+\kappa -\lambda)!} \langle S^{\sigma+\sigma' - \lambda}Z^{\zeta+\zeta'+\kappa+\kappa' -\lambda} \rangle\Bigg)
\end{split}
\end{align}
The rows of matrix $\mathcal{B}_{2d}$ are indexed by the tuple $(\sigma,\zeta)$ and the columns by $(\sigma',\zeta')$. Another reason why $\cO = S^{\sigma}[Z(1-Z)]^{\zeta}$ is employed is seen from this binomial expansion form of the matrix element: it is a sum of $f_{\sigma,\zeta}$-type moments, which can be readily found from the recursion relations of \S \ref{ss:rec_rels}.

$\cB_{2d}$ can be built from any general combination of $\sigma, \zeta, \sigma', \zeta'$, providing the Dagger lemma is satisfied and the appropriate restrictions in the $(\sigma, \zeta)$ space are used, according to \S \ref{ss:octant_choice}. We choose to focus on a simple subset of $(\sigma, \zeta)$, constructing two matrices with the following element definitions
\begin{align}
\label{eqn:boot_b2d_mat_restrictions}
\begin{split}
(\cB_{2d}')_{(\sigma, \zeta),(\sigma', \zeta')} = (\cB_{2d})_{(\sigma, \zeta),(\sigma', \zeta')} & \quad \text{ for all } \quad \zeta = \sigma \,, \quad \zeta' = \sigma' \,, \quad \sigma, \sigma' \geq 0 \,,
\\
(\cB_{2d}'')_{(\sigma, \zeta),(\sigma', \zeta')} = (\cB_{2d})_{(\sigma, \zeta),(\sigma', \zeta')} & \quad \text{ for all } \quad \zeta = 2\sigma \,,  \quad \zeta' = 2\sigma' \,, \quad \sigma, \sigma' \geq 0 \,.
\end{split}
\end{align}
Explicit examples of $\cB_{2d}'$ and $\cB_{2d}''$ at size $K=3$ are given below
\begin{flalign}
\label{eqn:boot_b2dp}
\cB_{2d}' &=
\begin{pmatrix}
(\cB_{2d}')_{(0,0),(0,0)} & (\cB_{2d}')_{(0,0),(1,1)} & (\cB_{2d}')_{(0,0),(2,2)} \\
(\cB_{2d}')_{(1,1),(0,0)} & (\cB_{2d}')_{(1,1),(1,1)} & (\cB_{2d}')_{(1,1),(2,2)} \\
(\cB_{2d}')_{(2,2),(0,0)} & (\cB_{2d}')_{(2,2),(1,1)} & (\cB_{2d}')_{(2,2),(2,2)} \\
\end{pmatrix} && \nonumber
\\
&=
\begin{pmatrix}
 1 & 0 & \frac{E(E-2) }{2(4E-3)} \\
 0 & \frac{E(E-2) }{2(4E-3)} & 0 \\
 \frac{E(E-2) }{2(4E-3)} & 0 & \frac{3 (E(E-4) (E+2) (E+14)+96)}{8 (4E-15) (4E-3)} \\
\end{pmatrix} &&
\end{flalign}
\begin{align}
\label{eqn:boot_b2dpp}
&\hspace*{-1cm}\mathcal{B}_{2d}'' =
\begin{pmatrix}
(\cB_{2d}'')_{(0,0),(0,0)} & (\cB_{2d}'')_{(0,0),(1,2)} & (\cB_{2d}'')_{(0,0),(2,4)} \\
(\cB_{2d}'')_{(1,2),(0,0)} & (\cB_{2d}'')_{(1,2),(1,2)} & (\cB_{2d}'')_{(1,2),(2,4)} \\
(\cB_{2d}'')_{(2,4),(0,0)} & (\cB_{2d}'')_{(2,4),(1,2)} & (\cB_{2d}'')_{(2,4),(2,4)} \\
\end{pmatrix} \nonumber
\\
&\hspace*{-1cm}= {\tiny
\left(\hspace*{-0.2cm}
\begin{array}{ccc}
 1 & \hspace{-1cm}  0 & \hspace{-1cm}  \frac{E \left(5 E^3-68 E^2+236 E-240\right)}{16 (4 E-35) (4 E-15) (4 E-3)} \\
 0 & \hspace{-1cm}  \frac{5 E^4-52 E^3+20 E^2+512 E-480}{16 (4 E-35) (4 E-15) (4 E-3)} & \hspace{-1cm} 0 \\
 \frac{E \left(5 E^3-68 E^2+236 E-240\right)}{16 (4 E-35) (4 E-15) (4 E-3)} & \hspace{-1cm}  0 & \hspace{-1cm} \frac{3 \left(77 E^8-4872 E^7+46536 E^6+2214656 E^5-49787184 E^4+286032000 E^3+25484544 E^2-2505572352 E+2299207680\right)}{1024 (4 E-143) (4 E-99) (4 E-63) (4 E-35) (4 E-15) (4 E-3)} \\
\end{array}
\hspace*{-0.2cm}\right)
}
\end{align}
where we have set $f_{0,0} = 1$. Note that these are not Hankel matrices, as $\cO$ contains both $S$ and $Z$. The matrix elements in both cases were obtained utilising three of the recursion relations: \eqnverty, \eqndiagy and \eqnupy . Surprisingly, we see that although this set of recursion relations use initial data $\{E,  f_{1,1}\}$, the matrices themselves only depend on the energy $E$. This is due to cancellations of the $f_{1,1}$ moments in the matrix elements. This makes the matrices real and symmetric: a fact that does not necessarily hold true when initial data $f_{1,1}$ is also present.


\subsubsection{Alternative two-operator matrix}
\label{sss:alt_two_op_matrix}
We can generate an alternative bootstrap matrix, $\tilde{\cB}_{2d}$, by switching the operator order from  $\langle \cO^{\dagger} \cO \rangle$ to $\langle \cO \cO^{\dagger} \rangle$.
In doing so, we restrict the range of the bootstrap operator $\cO$ further\footnote{
As a note of clarification, the operator indices $\sigma$ and $\zeta$ appear in two different contexts. The first instance is on operator $S^{\sigma} Z^{\zeta}$ which appear via $f_{\sigma,\zeta}$ in the recursion relations. The second instance is on the bootstrap matrix operator $S^{\sigma} [Z(1-Z)]^{\zeta}$. In the second instance, since the bootstrap matrix elements can be decomposed via the McCoy formula into sums over $f_{\sigma, \zeta}$-type moments, see \eqref{eqn:boot_odago_ele_expanded_mccoy} and \eqref{eqn:boot_oodag_ele_expanded_mccoy}, then these elements also follow the same restrictions that apply to the first instance. The additional restrictions of \eqref{eqn:boot_b2d_mat_restrictions}, which are used in both $\langle \cO^{\dagger} \cO \rangle$ and $\langle \cO \cO^{\dagger} \rangle$ cases, hence apply to the associated $f_{\sigma, \zeta}$ that appear and this is demonstrated by the darker green region ($\zeta \geq \sigma \geq 0$) in Figure \ref{fig:octant_with_eqns}.}
\begin{equation}
\label{eqn:boot_main_op_furt_res}
\cO = \sum_{\sigma, \zeta \in \Omega} c_{\sigma,\zeta} S^{\sigma}[Z(1-Z)]^{\zeta} \,,
\end{equation}
where $c_{\sigma,\zeta} \in \mathbb{C}$. Here, $\Omega$ refers to the positive upper octant: $\Omega = \{ \zeta \geq \sigma \geq 0 \}$. This restriction is applied since the expectation values appearing in the $\tilde{\cB}_{2d}$ matrix elements are shown to be finite in this region. By contrast, in the positive lower octant, $(\sigma > \zeta \geq 0)$, they can blow up. The proof of this statement may be found in Appendix \ref{app:fme_main_proof}, where we insert a complete set of energy eigenstates $\psi_{n}$ to show that the finiteness is linked to the Wigner (3j) coefficient and its selection rules. We provide explicit examples of positive, lower and upper octant calculations in Appendix \ref{app:fme_up_low_examples}, to show how these infinities arise. Also, depending on the insertion point of the complete set of states (i.e. between which operators they are inserted), infinities can arise in the both upper and lower positive octant cases. However, by applying regularisation, only the upper positive octant appears to yield consistent outcomes, as can be seen in Appendix \ref{app:fme_regularise}.

From the $\cO$ defined in \eqref{eqn:boot_main_op_furt_res}, the alternative 2d matrix, $\tilde{\cB}_{2d}$, therefore has matrix elements
\begin{align}
\begin{split}
\label{eqn:boot_ele_oodag}
\left(\tilde{\cB}_{2d}\right)_{(\sigma,\zeta),(\sigma',\zeta')}
&= \left\langle S^{\sigma}[Z(1-Z)]^{\zeta} \left( S^{\sigma'}[Z(1-Z)]^{\zeta'} \right)^{\dagger} \right\rangle
\\
&= \left\langle S^{\sigma}[Z(1-Z)]^{\zeta} [Z(1-Z)]^{\zeta'} \left( S^{\sigma'} \right)^{\dagger} \right\rangle
\\
&= \left\langle S^{\sigma}[Z(1-Z)]^{\zeta + \zeta'} S^{\sigma'} \right\rangle \,.
\end{split}
\end{align}
where the Dagger lemma  \eqref{eqn:boot_lemma_rem_dag_eqn} was once again employed to remove the $\dagger$ from $(S^{\sigma'})^{\dagger}$, and thus, $\zeta+\zeta' \geq \sigma'$ must be enforced throughout the calculation. Using the commutation relation, binomial expansion and the McCoy formula we bring it to its final form, where like $\cB_{2d}$, each element is a sum of $f_{\sigma,\zeta}$-type expectation values,
\begin{align}
\label{eqn:boot_oodag_ele_expanded_mccoy}
\begin{split}
&\left(\tilde{\cB}_{2d}\right)_{(\sigma,\zeta),(\sigma',\zeta')}
\\
&=\sum_{\kappa=0}^{\zeta+\zeta'}(-1)^{\kappa} \binom{\zeta + \zeta'}{\kappa} \left\langle S^{\sigma + \sigma'} Z^{\zeta+\zeta' +\kappa} +  \hspace{-0.3cm} \sum_{\lambda=1}^{\text{min}(\sigma',\,\zeta+\zeta'+\kappa)} \hspace{-0.3cm} \frac{(-i)^{\lambda}(\sigma')!(\zeta+\zeta' +\kappa)!  }{\lambda! (\sigma'-\lambda)! (\zeta+\zeta'+\kappa-\lambda)!} S^{\sigma + \sigma'-\lambda} Z^{\zeta+\zeta'+\kappa-\lambda}  \right\rangle \,.
\end{split}
\end{align}
We establish two unique matrices, $\tilde{\cB}_{2d}'$ and $\tilde{\cB}_{2d}''$ which are defined by applying the same subsets of indices as presented in \eqref{eqn:boot_b2d_mat_restrictions} for $\cB_{2d}$. These $(\sigma, \zeta)$ choices are in $\Omega$ and also ensure the lemma index constraint is automatically satisfied.
Taking matrix size parameter $K=3$, the explicit matrices are given below
\begin{align}
\label{eqn:boot_tilb2dp}
\tilde{\cB}_{2d}' &=
\begin{pmatrix}
(\tilde{\cB}_{2d}')_{(0,0),(0,0)} & (\tilde{\cB}_{2d}')_{(0,0),(1,1)} & (\tilde{\cB}_{2d}')_{(0,0),(2,2)} \\
(\tilde{\cB}_{2d}')_{(1,1),(0,0)} & (\tilde{\cB}_{2d}')_{(1,1),(1,1)} & (\tilde{\cB}_{2d}')_{(1,1),(2,2)} \\
(\tilde{\cB}_{2d}')_{(2,2),(0,0)} & (\tilde{\cB}_{2d}')_{(2,2),(1,1)} & (\tilde{\cB}_{2d}')_{(2,2),(2,2)} \\
\end{pmatrix} = \begin{pmatrix}
1 & 0 & \frac{E(E-2)}{2(4E-3)}
\\
0 & \frac{E^2}{2(4E-3)} & 0
 \\
\frac{E(E-2)}{2(4E-3)} & 0 & \frac{3 E^2 (E-2)^2}{8(4E-15)(4E-3)} \\
\end{pmatrix}
\\
\label{eqn:boot_tilb2dpp}
\tilde{\cB}_{2d}'' &=
\begin{pmatrix}
(\tilde{\cB}_{2d}'')_{(0,0),(0,0)} & (\tilde{\cB}_{2d}'')_{(0,0),(1,2)} & (\tilde{\cB}_{2d}'')_{(0,0),(2,4)} \\
(\tilde{\cB}_{2d}'')_{(1,2),(0,0)} & (\tilde{\cB}_{2d}'')_{(1,2),(1,2)} & (\tilde{\cB}_{2d}'')_{(1,2),(2,4)} \\
(\tilde{\cB}_{2d}'')_{(2,4),(0,0)} & (\tilde{\cB}_{2d}'')_{(2,4),(1,2)} & (\tilde{\cB}_{2d}'')_{(2,4),(2,4)} \\
\end{pmatrix} \nonumber
\\
&= {\tiny
 \left(
\begin{array}{ccc}
 1 & 0 & \frac{E \left(5 E^3-68 E^2+236 E-240\right)}{16 (4 E-35) (4 E-15) (4 E-3)} \\
 0 & \frac{E^2 \left(5 E^2-64 E+180\right)}{16 (4 E-35) (4 E-15) (4 E-3)} & 0 \\
 \frac{E \left(5 E^3-68 E^2+236 E-240\right)}{16 (4 E-35) (4 E-15) (4 E-3)} & 0 & \frac{21 (E-2)^2 E^2 \left(11 E^4-940 E^3+27948 E^2-340224 E+1425600\right)}{1024 (4 E-143) (4 E-99) (4 E-63) (4 E-35) (4 E-15) (4 E-3)} \\
\end{array}
\right)
}
\end{align}
where we have set $f_{0,0} = 1$. The same three recursion relations \eqnverty, \eqndiagy and \eqnupy are used to generate all element moments, and the initial data is once more just the energy $E$, due to $f_{1,1}$ cancellations.



\section{Results}
\label{s:numerical_results}

With the bootstrap matrices constructed, we now test that they are positive semi-definite on a selection of initial data. For all five types of matrices considered, $\{ \cB_{1d}, \cB_{2d}',\cB_{2d}'', \tilde{\cB}_{2d}', \tilde{\cB}_{2d}'' \}$, the initial data\footnote{To reiterate, the matrices generally depend on $\{E,f_{1,1}\}$, but due to cancellations of $f_{1,1}$ in $\cB_{2d}$ and $\tilde{\cB}_{2d}$, the energy is the only initial data in the discussed cases.} is simply the energy $E$. Given this one-dimensional search space, we scan over energy values in the interval $E \in [0,50]$, with a step size of $\Delta E = 10^{-2}$ for matrix sizes $ 2 \leq K \leq 8$ (for $K \times K$ bootstrap matrices). In Appendix \ref{app:slack_var_method} we present a self-contained description of an alternative, semi-definite programming search method that uses a so-called slack variable. 


\subsection{One-operator matrix}
\label{ss:b1d_results}
For $\cB_{1d}$, the results are presented in Figure \ref{fig:1d_vertical_energy_vs_matdim}, plotting matrix size parameter $K$ vs. energy $E$. Here the black dashed lines indicate the energy eigenvalues of the model, which are $E_{n}= n(n+1)$ for $n=0,1,2 \dots$ and calculated using the analytic solution integration of $\langle \psi_{n} | H | \psi_{n} \rangle$. The red crosses denote singularities\footnote{Note, the proof in Appendix \ref{app:fme_main_proof} of finite matrix elements applies to eigenstates and therefore finiteness is only guaranteed at the energy eigenvalues.} in the matrix at a given $K$. The primary result for this matrix is that the allowed energies $E$ are confined to bands. The figure shows that bands for larger energy eigenvalues emerge as we increase $K$. Although some bands shrink as $K$ is increased, it is possible that they remain finite at $K=\infty$. For example the bands around $E=2$ decrease between $K=3$ and $K=8$, but slowly relative to the size of the matrix.

\begin{figure}[t!]
\centering
\includegraphics[width=14cm, height=6.8cm]{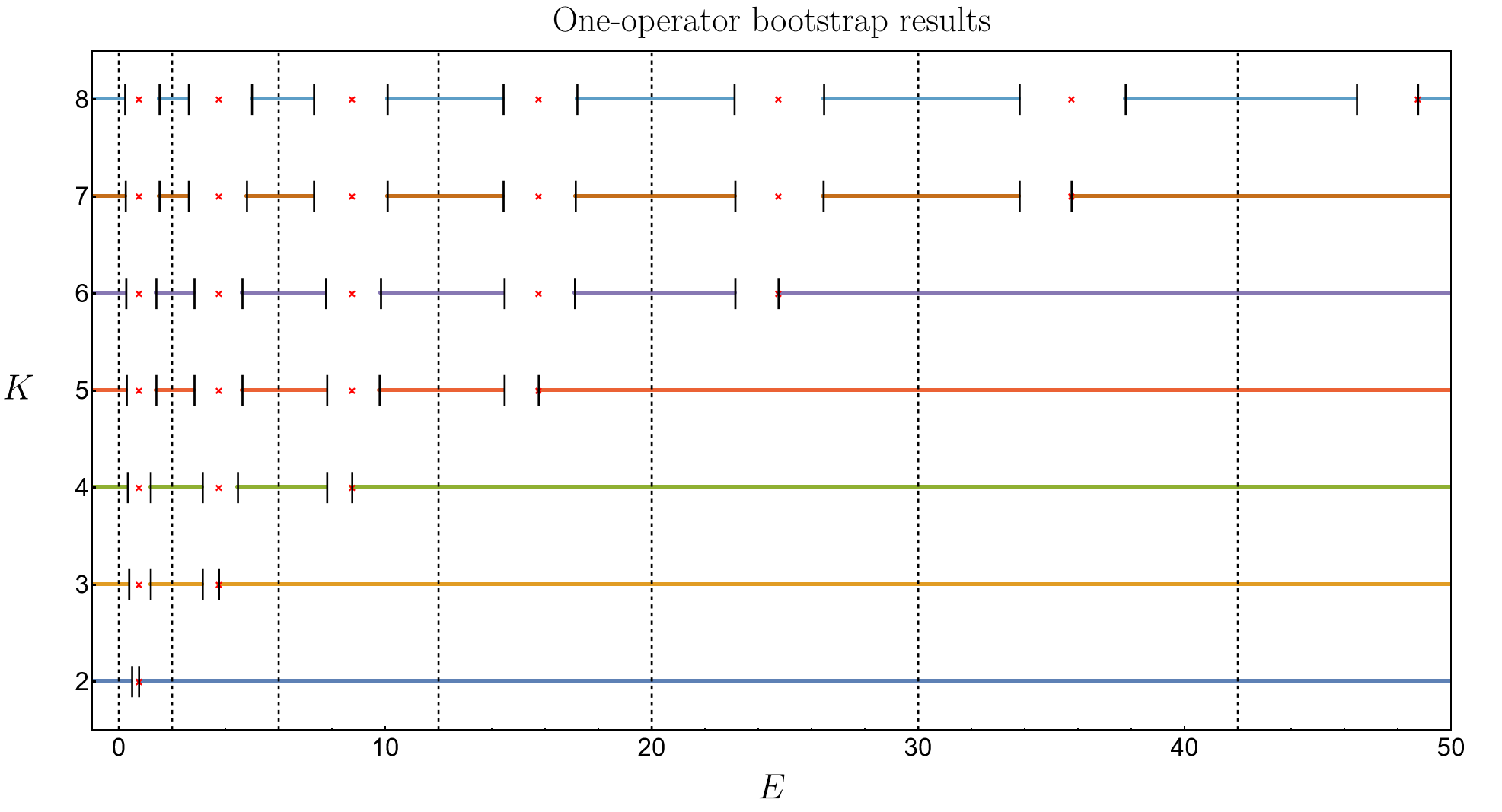}
\captionsetup{skip=10pt}
\caption{Matrix size parameter $K$ vs. energy $E$ for the bootstrap matrix $\cB_{1d}$. Black dashed lines indicate energy eigenvalues, while red crosses indicate those $E$ values which produce singularities in the matrix elements. At larger $K$, we see more bands forming, revealing localised regions around the energy eigenvalues. The plot also indicates the results for small negative $E$.}
\label{fig:1d_vertical_energy_vs_matdim}
\end{figure}


\subsection{Two-operator matrix}
\label{ss:b2d_results}

Here the results of the bootstrap for matrices $\cB_{2d}'$ and $\cB_{2d}''$ are presented in Figures \ref{fig:b2dp_step} and \ref{fig:b2dpp_step} respectively. The black dashed lines refer to energy eigenvalues and red crosses to matrix singularities. In both cases, the bootstrap performs poorly, unable to constrain the energy eigenvalues effectively. The $\cB_{2d}''$ performs slightly better, with smaller bands appearing compared to $\cB_{2d}'$, for example at $K=8$ around $E=2$. For larger energies, these $\cB_{2d}$ cases provide almost no restriction on the  energies. This is seen by how close the vertical black bars (ends of the bands) are to the singularities and for $E>20$, the bootstrap essentially rules no energies out.

\begin{figure}[t!]
\begin{center}
\includegraphics[width=14cm, height=6.8cm]{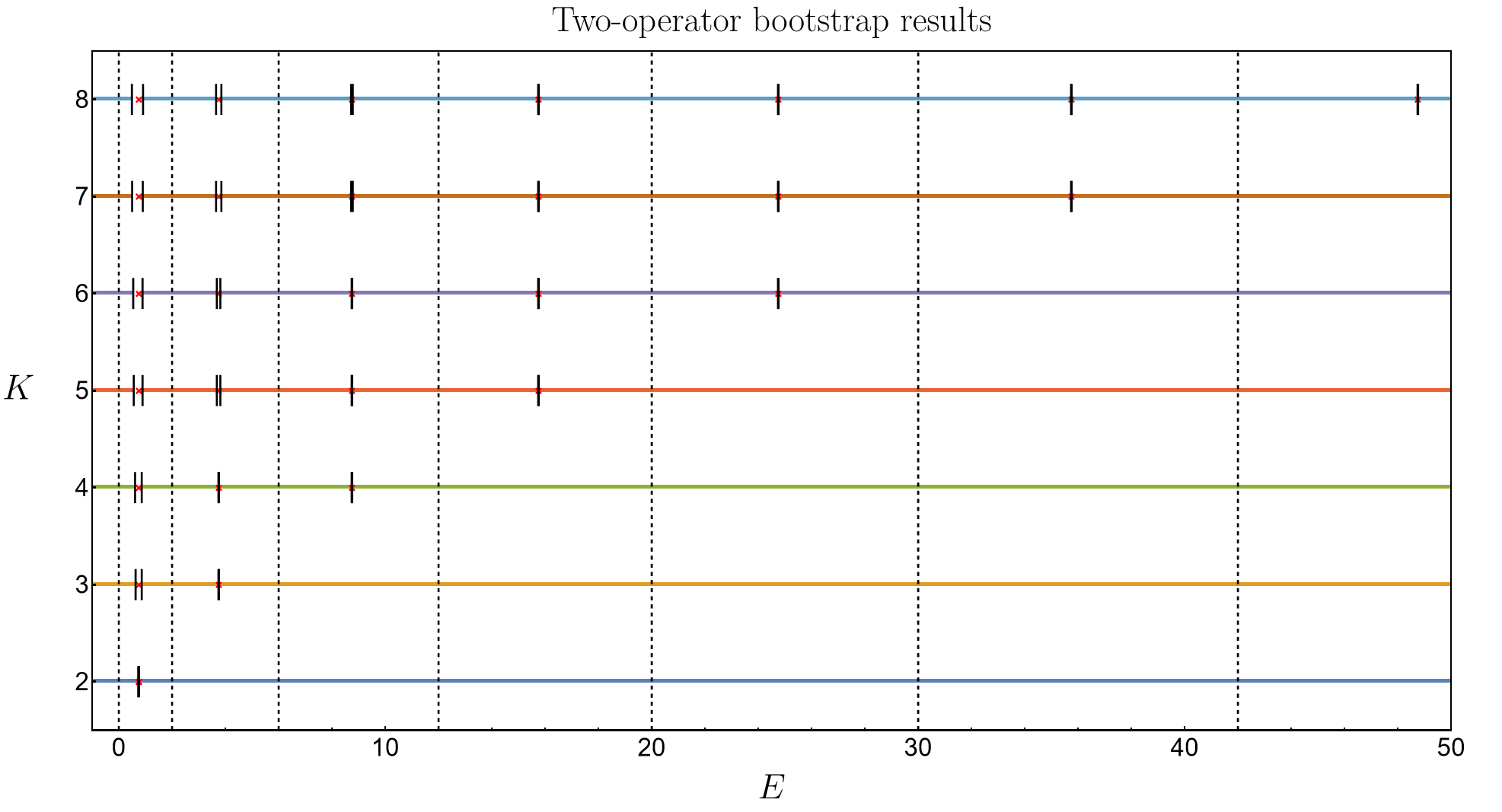}
\caption{$K$ vs. $E$ for matrix $\cB_{2d}'$. The plot shows that the bootstrap fails to constrain the energy eigenvalues of the system. Even for $K=8$, the bands remain large, with a performance worse than the $\cB_{1d}$ case. Red crosses represent matrix singularities, and black dashed lines are energy eigenvalues. }
\label{fig:b2dp_step}
\end{center}
\end{figure}

\begin{figure}[t!]
\begin{center}
\includegraphics[width=14cm, height=6.8cm]{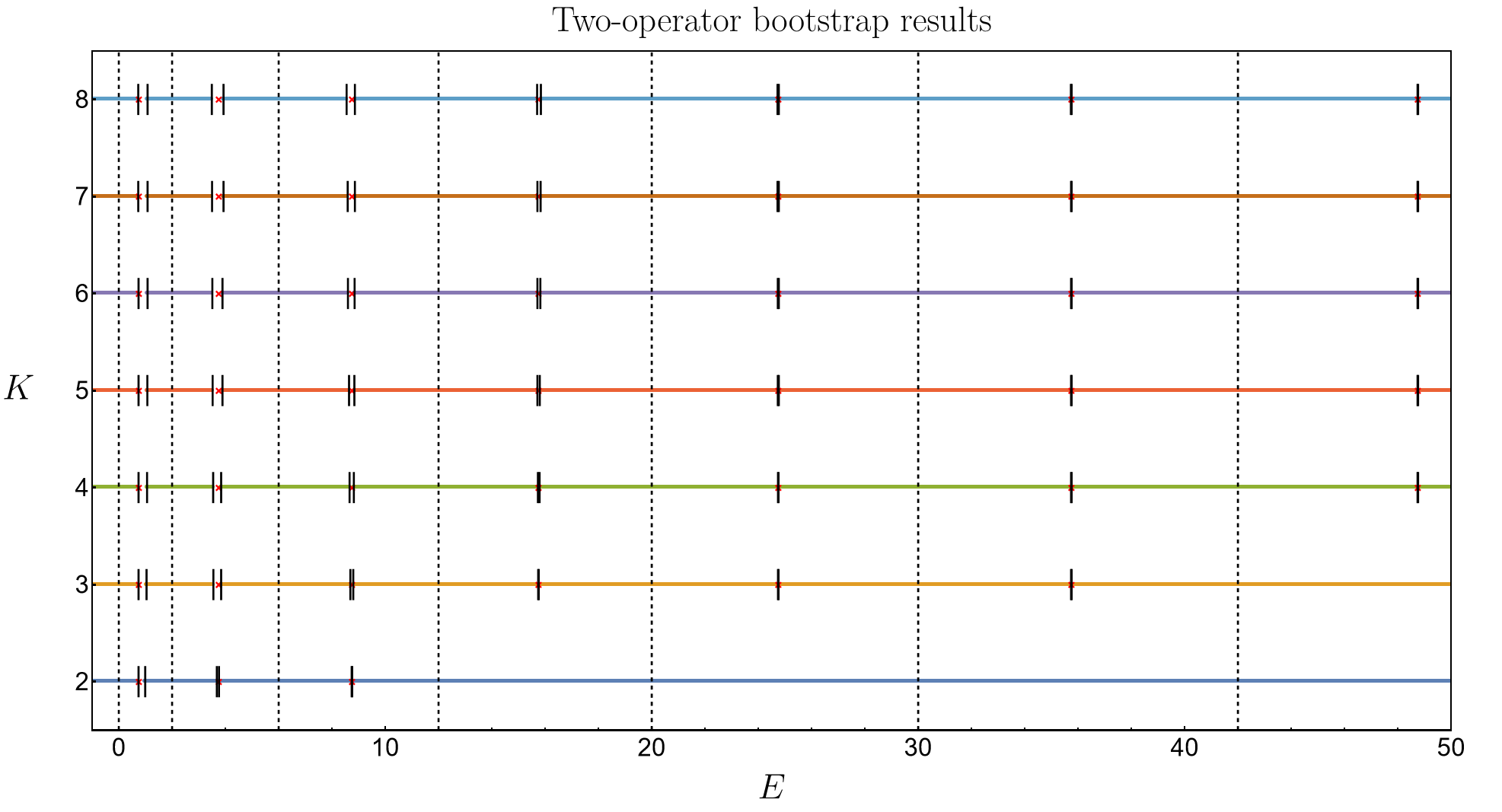}
\caption{$K$ vs. $E$ for matrix $\cB_{2d}''$. The bootstrap performs poorly in isolating the energy eigenvalues. We see slight improvement compared to the $\cB_{2d}'$ case in Figure \ref{fig:b2dp_step}, but the band size reduction with increasing $K$ is still minimal.}
\label{fig:b2dpp_step}
\end{center}
\end{figure}


\subsection{Alternative two-operator matrix}
\label{ss:tildeb2d_results}

Figures \ref{fig:tilb2dp_step} and \ref{fig:tilb2dpp_step} display the most significant finding of the paper, corresponding to the bootstrap results of matrices $\tilde{\cB}_{2d}'$ and $\tilde{\cB}_{2d}''$. In both cases, the bootstrap is capable of identifying the lower-lying energy eigenvalues of the Hamiltonian exactly, up to the chosen step size. Figure \ref{fig:tilb2dp_step} shows that at a given $K$, a certain number of energy eigenvalues are precisely located and are indicated by black crosses. Then, at energies larger than a particular singularity, a single energy band remains. 
Figure \ref{fig:tilb2dpp_step} associated to $\tilde{\cB}_{2d}''$, locates the same energy eigenvalues, but also constrains a selection of the larger energies into bands. The number of energy levels found exactly in both cases, appears to be equal to $K-1$. This alternative ordering therefore greatly outperforms the ordering encountered in both $\cB_{1d}$ and $\cB_{2d}$.

\begin{figure}[t!]
\begin{center}
\includegraphics[width=14cm, height=6.8cm]{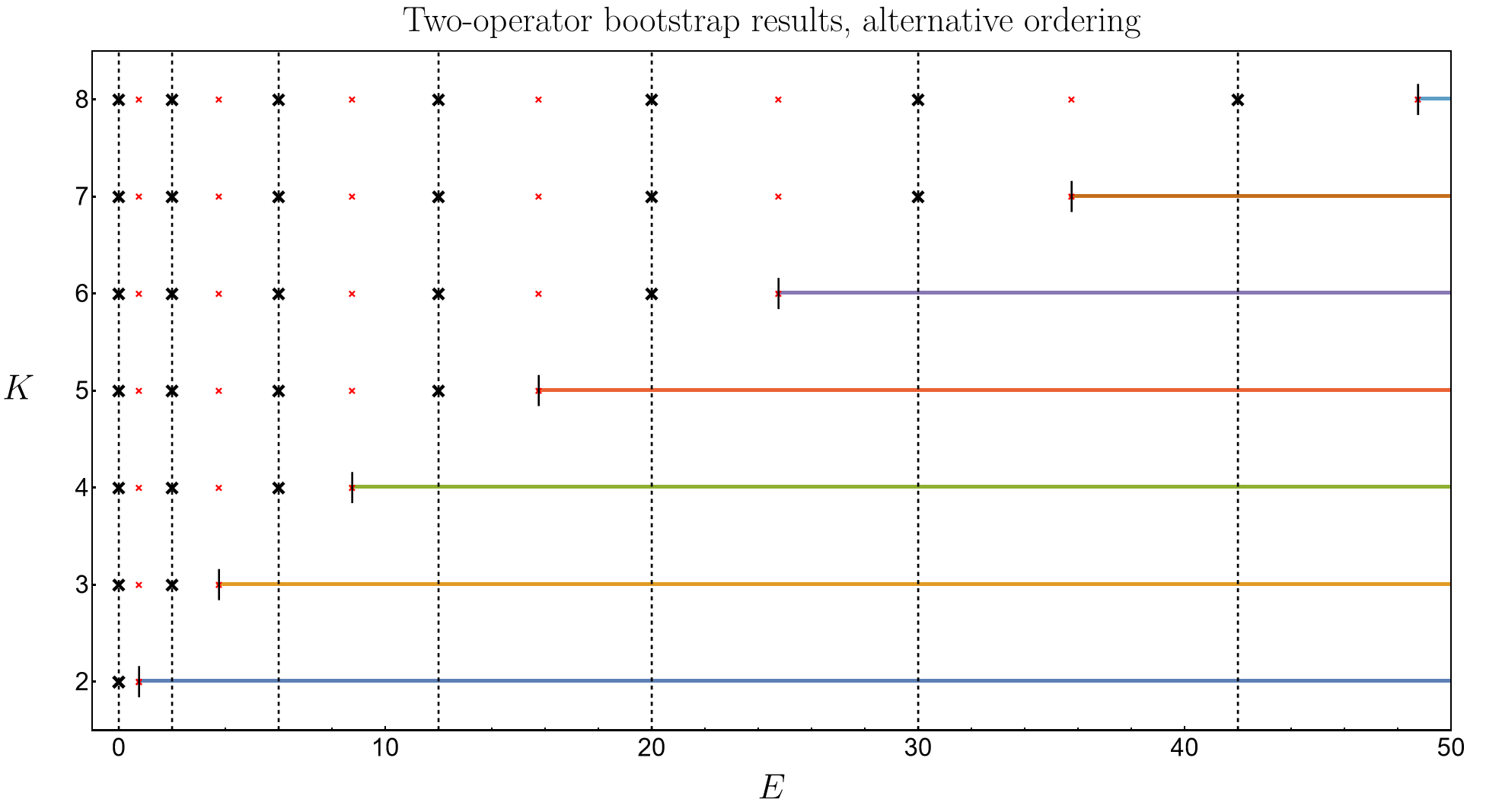}
\caption{$K$ vs. $E$ for matrix $\tilde{\cB}_{2d}'$. The coloured bands/black crosses correspond to the regions/points in the energy space which make $\tilde{\cB}'_{2d} \succeq 0$. The black crosses are the most significant result, indicating that the bootstrap is capable of identifying a number of energy eigenvalues exactly. }
\label{fig:tilb2dp_step}
\end{center}
\end{figure}

\begin{figure}[t!]
\begin{center}
\includegraphics[width=14cm,height=6.8cm]{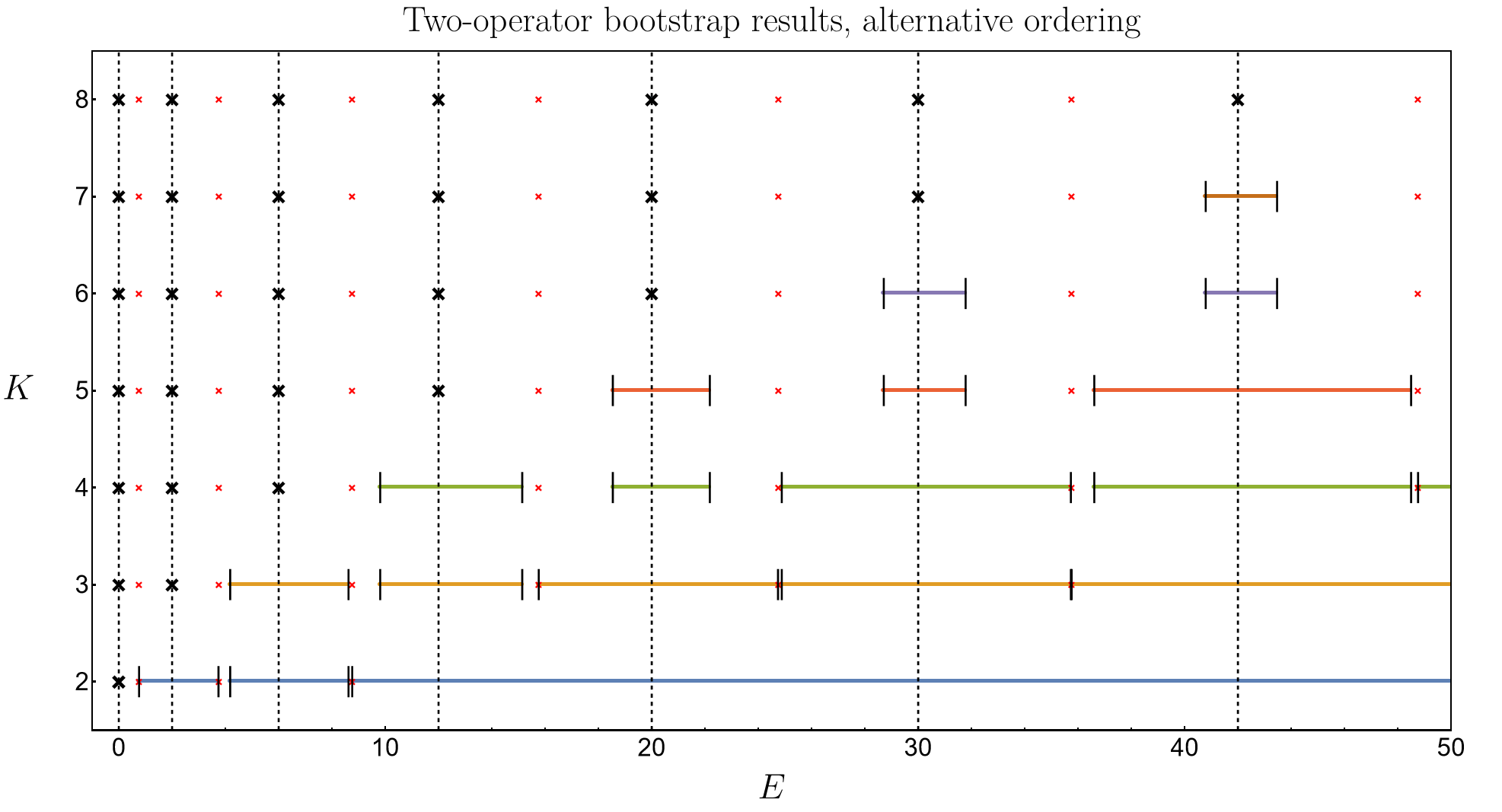}
\caption{$K$ vs. $E$ for matrix $\tilde{\cB}_{2d}''$. Here we find that not only are a set of energy eigenvalues identified exactly for a given $K$, but larger values of $E$ are also constrained into bands. For every integer increase in $K$, we appear to obtain an additional energy eigenvalue.}
\label{fig:tilb2dpp_step}
\end{center}
\end{figure}

\clearpage

\begin{figure}[htb!]
\begin{center}
\includegraphics[width=10cm, height=6cm]{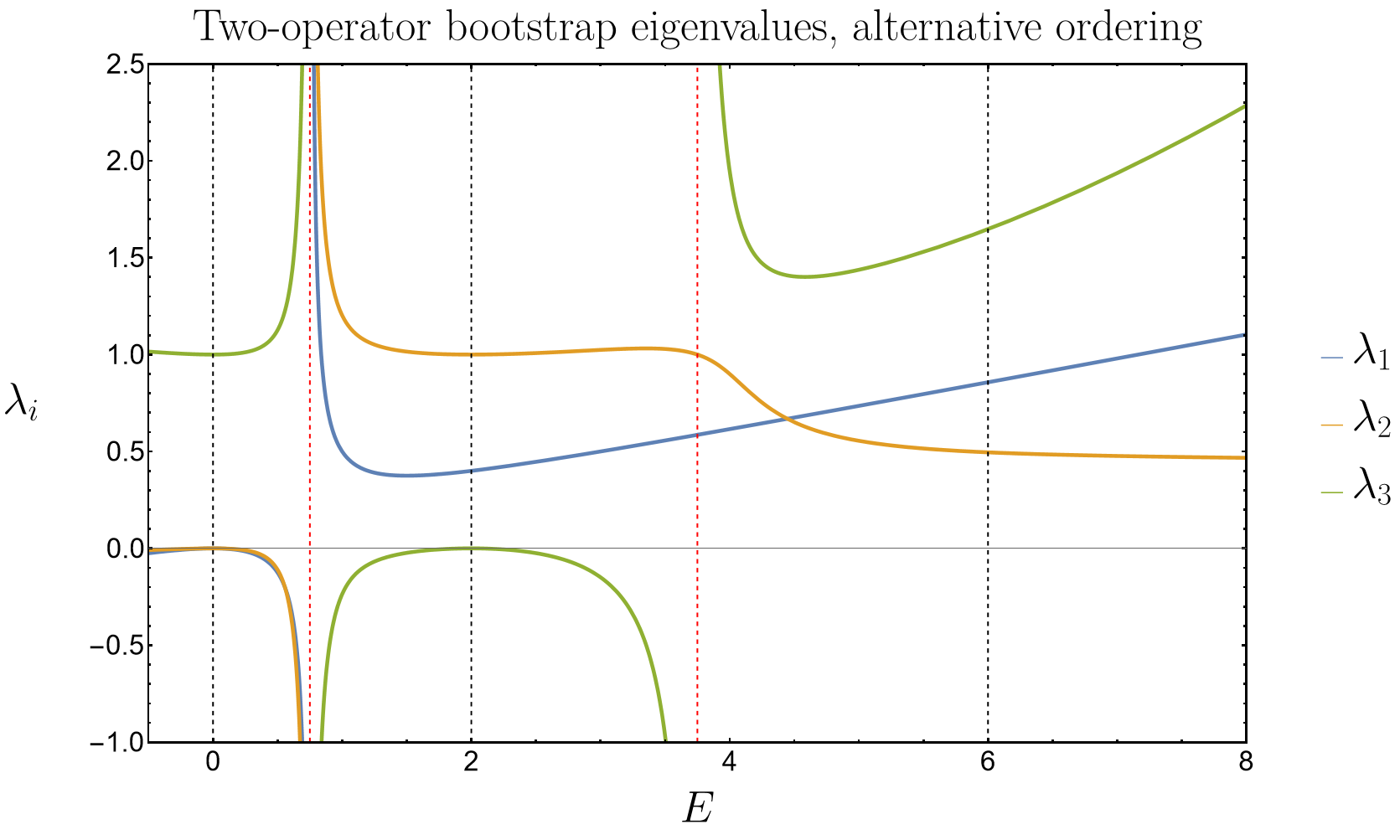}
\caption{Matrix eigenvalues $\lambda_{i}$ vs. energy $E$, for the alternatively ordered bootstrap matrix, $\tilde{\cB}_{2d}'$, at $K = 3$. Only at the precise energy levels $E=0,2$ and in band $E>15/4$ are all matrix eigenvalues non-negative. Red dashed lines are singularities in the $\tilde{\cB}_{2d}'$ eigenvalues and black dashed lines are the energy eigenvalues. }
\label{fig:tilb2dp_eig}
\end{center}
\end{figure}

\subsection{Obtaining exact energy levels}
\label{ss:eigen_analysis}
Here we study the eigenvalues of the $\tilde{\cB}_{2d}'$ matrix, to help understand the emergent exactness of $E$. Positive semi-definiteness is equivalent to its eigenvalues, $\lambda_{i}$, satisfying $\lambda_{i} \geq 0 \,,$ for all $i=1,\dots, K$. We compute $\lambda_{i}$ directly using the characteristic equation, $\det{\left(\tilde{\cB}_{2d}' - \lambda \mathbb{I}\right)} = 0$, and solve the constraints analytically. Using $K=3$ as an example, the solutions of these positivity constraints are $E=0$, $E=2$ and $E>15/4$. The product of the $\tilde{\cB}_{2d}'$ matrix's (see \eqref{eqn:boot_tilb2dp}) eigenvalues is captured by the determinant
\begin{equation}
\label{eqn:num_res_det_eigs}
\det{\tilde{\cB}_{2d}'} = \frac{E^4 (E-2)^2 (4E+21)}{16 (4E-3)^3 (4E-15)}\,,
\end{equation} and to provide a visual overview of the analytic results, we plot the corresponding eigenvalues vs. energy in Figure \ref{fig:tilb2dp_eig}. In the interval $E\in[-1,15/4]$ there is always at least one negative eigenvalue, except at $E=0,2$. These are precisely the first two energy levels of the system. This coincides with the findings in Figure \ref{fig:tilb2dp_step}, which were obtained numerically. We have checked the determinant of this matrix up to $K=12$. As in \eqref{eqn:num_res_det_eigs}, the energy eigenvalues of the system appear in the numerator of the determinant.

We tested the positivity of both $\tilde{\cB}_{2d}'$ and $\tilde{\cB}_{2d}''$ directly, up to $K=5$. The exact solutions to these constraint equations are the energy eigenvalues $E=0,2,6,12$ in both cases. The final single band behaviour of $\tilde{\cB}_{2d}'$ and the multiple-band behaviour for the $\tilde{\cB}_{2d}''$ matrix are also solutions, in agreement with the step search. In summary, for $\tilde{\cB}_{2d}$ at the finite $K$ considered, the first $K-1$ energy eigenvalues are fixed exactly by solving constraints $\lambda_{i} \geq 0 \,,$ for all $i=1,\dots, K$. While a general proof remains to be found, the analysis hints that this result should hold for arbitrary $K$.

\clearpage
\section{Conclusion}
\label{s:conclusion}

In this article, we have explored a quantum mechanical model defined on an interval and have shown that the bootstrap is capable of fixing its energy eigenvalues exactly. This adds to a number of examples in the literature, see \cite{Han:2020bkb,Lin:2020mme,Aikawa:2021qbl,Berenstein:2021loy,Khan:2022uyz,Du:2021hfw,Berenstein:2021dyf,Nancarrow:2022wdr,Berenstein:2022ygg,Berenstein:2022unr,Bhattacharya:2021btd,Hu:2022keu}, where numerical approximations constrained the expectation values. We began by constructing a self-adjoint Hamiltonian, $H=SZ(1-Z)S$, where $S$ and $Z$ are operators with $[S,Z]=i$. We found a set of 2d recursion relations on moments $\langle S^{\sigma}Z^{\zeta} \rangle$. The system is solvable which proves beneficial in showing the recursion relations are anomaly-free. A set of bootstrap matrices were then constructed using these relations, by considering the positivity of $\langle \cO^{\dagger} \cO \rangle$ and $\langle \cO \cO^{\dagger} \rangle$ where bootstrap operator $\cO$ is a linear combination of operators $Z^{\zeta}$, or composite $ S^{\sigma}[Z(1-Z)]^{\zeta}$. In the first case, we denote these bootstrap operators by $\cO_{1d}$ and in the second by $\cO_{2d}$.

Importantly, $S$ is not self-adjoint, therefore the calculation of $\langle \cO_{2d} \cO_{2d}^{\dagger} \rangle$ requires some care. In a particular octant of the $(\sigma, \zeta)$-plane, see the dark green region in Figure \ref{fig:octant_with_eqns}, the {\it Dagger lemma} \eqref{eqn:boot_lemma_rem_dag_eqn} guarantees that $S$ can be treated as a self-adjoint operator. Outside of this octant, we are forced to insert a complete set of states to evaluate the expectation values. The result can be a divergent sum, which after regularisation, depends on the position of the insertion, thereby making the $\langle \cO_{2d} \cO_{2d}^{\dagger} \rangle$ ill-defined, see Appendix \ref{app:fme_regularise}.

The positivity of $\langle \cO_{1d}^{\dagger} \cO_{1d} \rangle$ confined the possible energy eigenvalues into bands. We note that as in \cite{Berenstein:2022ygg}, one can consider the positivity of $\langle Z^{n}(1-Z)^{m} \cO_{1d}^{\dagger} \cO_{1d} \rangle$. We have tested $m,n = 0,1$ and did not find qualitatively better results. The positivity constraints of $\langle \cO_{2d}^{\dagger} \cO_{2d} \rangle$ performed poorly, unable to confine the eigenvalues as strongly as the 1d case. Quite unexpectedly, the positivity of the alternatively ordered $\langle \cO_{2d} \cO_{2d}^{\dagger} \rangle$ was able to identify an increasing number of energy eigenvalues exactly, for increasing matrix size.

The outcome of the bootstrap depends heavily on the choice and ordering of the bootstrap matrix operators. Understanding which choices are optimal would be beneficial. It is possible this particular model is special and therefore perturbations of the system could provide more intuition about the bootstrap. Surprisingly, the $\langle S Z \rangle$ moment does not feature in the bootstrap, dropping out of the positivity calculations. Understanding these cancellations may provide insight into why the bootstrap produces exact results.

A shortcoming of the current paper is the calculation of anomalies, where we used the analytic solution to show that they vanish. This may not be true for the general boundary conditions of \S\ref{ss:gen_bound_con} and such considerations are left for future work. Another possible research avenue is to better understand the finiteness argument in Appendix \ref{app:fme_main_proof}, potentially utilising the underlying supersymmetry of the system.

Using the coordinate transformation in \S\ref{ss:po_tell_coords}, it is possible to recast the calculations in terms of canonically conjugate coordinates $p$ and $u \equiv i\partial_{p}$. One can then consider bootstrap operators of form $\cO = u^{m}\sech^{n}(p/2) \tanh^{k}(p/2)$, for integers $m,n$ and $k=0,1$ and repeat the same tests for positivity. This may lead to interesting results.

In closing, we have shown that the bootstrap is able to identify the energy eigenvalues of the system exactly upon applying a finite number of positivity constraints. We look forward to revisiting the issues and challenges discussed in this section in our future endeavours.

{ \bf Acknowledgements}: We thank Laurentiu Rodina and Sean Hartnoll for useful discussions. We thank Yuan Xin for correspondence and David Berenstein for insightful comments on the paper. LS is
supported by an STFC quota studentship. DV is supported by the STFC Ernest Rutherford
grant ST/P004334/1 and by the STFC Consolidated
Grant ST/T000686/1 ``Amplitudes, strings \& duality". No new data were generated or analysed during this study.

\newpage
\clearpage

\appendix

\section{Alternative derivation of the \texorpdfstring{$\langle Z^{\zeta} \rangle$}{} recursion relation }
\label{app:drr}
The recursion relation in $\langle Z^{\zeta} \rangle$ moments can be alternatively derived as follows\footnote{Note, the result from deriving this recursion relation is valid provided there are no anomalies, as discussed in \S \ref{ss:anomalies}}. To begin, we restate the Hamiltonian 
\begin{equation}
H = SZ(1-Z)S = S^{2} Z(1-Z) + i S(2Z-1)\,,
\end{equation}
the operator $S= i \partial_{z}$ and the commutator $[S,Z] = i$. We introduce three operators: $\cO_{a}(Z)$, $\cO_{c}(S,Z) = S\cO_{b}(Z)$ and $\cO_{d}(Z)$ that feature in two commutation equations and an energy equation 
\begin{equation}
\label{eqn:app_drr_oa}
\langle [H, \cO_{a}(Z)]\rangle =0 \,,
\end{equation}
\begin{equation}
\label{eqn:app_drr_oc}
\langle [H, \cO_{c}(S,Z)]\rangle =0 \,,
\end{equation}
\begin{equation}
\label{eqn:app_drr_h_od}
\langle  H \cO_{d}(Z)\rangle = E \langle \cO_{d}(Z) \rangle \,.
\end{equation}
The motivation for this initial setup is based on similar calculations seen in the literature, for example \cite{Han:2020bkb,Berenstein:2021dyf}. Starting with equation \eqref{eqn:app_drr_oa}, using the commutator to order all $S$ operators to the left, we obtain
\begin{equation}
\label{eqn:app_drr_oa_full_exp}
\langle [H, \cO_{a}(Z)]\rangle = 2\langle S[S,\cO_{a}(Z)]Z(1-Z) \rangle + \langle [[S,\cO_{a}(Z)],S]Z(1-Z)\rangle + i \langle [S,\cO_{a}(Z)](2Z-1)\rangle = 0 \,.
\end{equation}
Similarly for equation \eqref{eqn:app_drr_oc}
\begin{equation}
\label{eqn:app_drr_oc_full_exp}
\langle [H, \cO_{c}(S,Z)]\rangle = \langle S^2 \alpha_{1}(Z) \rangle + \langle S \alpha_{2}(Z) \rangle = 0\,,
\end{equation}
where
\begin{equation}
\alpha_{1}(Z) = 2[S,\cO_{b}(Z)]Z(1-Z) + [Z(1-Z),S]\cO_{b}(Z) \,,
\end{equation}
\begin{equation}
\alpha_{2}(Z) = [[S,\cO_{b}(Z)],S]Z(1-Z) + i [2Z-1,S]\cO_{b}(Z) + i [S, \cO_{b}(Z)](2Z-1)  \,.
\end{equation}
As we want to create a recursion relation in $Z$ alone, we must eliminate the expectation values of form $\langle S^2  h(Z) \rangle$ and $\langle S g(Z) \rangle$, for the arbitrary functions $h(Z),g(Z)$ that may appear. By defining
\begin{equation}
\label{eqn:app_drr_od}
\cO_{d}(Z) = (Z(1-Z))^{-1} \alpha_{1}(Z) \,,
\end{equation}
we can insert this into \eqref{eqn:app_drr_h_od} to obtain
\begin{align}
\label{eqn:app_drr_od_full_exp}
\langle S^2 Z(1-Z)\cO_{d}(Z) \rangle + i\langle S (2Z-1)  \cO_{d}(Z)  \rangle &= E\langle \cO_{d}(Z) \rangle
\end{align}
which, after expanding out, becomes
\begin{equation}
\label{eqn:app_drr_s_squared_alph_1}
\langle S^2 \alpha_{1}(Z) \rangle + i\langle S (2Z-1)  (Z(1-Z))^{-1} \alpha_{1}(Z)  \rangle
= E\langle (Z(1-Z))^{-1}\alpha_{1}(Z) \rangle  \,.
\end{equation}
We now substitute equation \eqref{eqn:app_drr_s_squared_alph_1} into  \eqref{eqn:app_drr_oc_full_exp}, to remove the $\langle S^{2} \alpha_{1}(Z) \rangle$ term,
\begin{equation}
\label{eqn:app_drr_od_full_exp2}
E\langle (Z(1-Z))^{-1}\alpha_{1}(Z) \rangle + \langle S \left(\alpha_{2}(Z) - i (2Z-1)  (Z(1-Z))^{-1} \alpha_{1}(Z)  \right)\rangle = 0
\end{equation}
where we have grouped the terms that $S$ acts on. We are left with two equations, \eqref{eqn:app_drr_oa_full_exp} and \eqref{eqn:app_drr_od_full_exp2}, containing terms of $\langle S g(Z) \rangle$ form. To eliminate these terms, we look to substitute one equation into the other and can do so, providing that the operators  $S$ acts on in both equations are equal. Hence
\begin{equation}
\label{app_drr_diff_eqn}
 2 [S,\cO_{a}(Z)]Z(1-Z) =  \alpha_{2}(Z) - i(2Z-1)(Z(1-Z))^{-1} \alpha_{1}(Z)  \,.
\end{equation}
Acting on an arbitrary wavefunction with each side of this operator equation, produces a differential equation in functions $\cO_{a}(z)$ and $\cO_{b}(z)$\footnote{Note that $\alpha_{1}(Z), \alpha_{2}(Z)$ are functions of the operator $\cO_{b}(Z)$.}. This equation is
\begin{equation}
\label{eqn:app_drr_diff_eqn_oa_ob}
\left(\frac{1}{z(1-z)}-2 \right)\cO_{b}(z) + (2z-1) \cO_{b}'(z)+z(1-z)\cO_{b}''(z) = 2 i z(1-z) \cO_{a}'(z)
\end{equation}
with $\cO'(z) = \partial_{z}\cO(z)$. Setting\footnote{This choice of $\cO_{b}$, leads to a final recursion relation that is expressed in terms of $\langle Z^{\zeta} \rangle$-type moments. The motivation for this form of $\cO_{b}$ was an educated guess, based on the form of the Hamiltonian (containing $Z(1-Z)$) as well as the aim to relate $\langle Z^{\zeta} \rangle$-type moments.}
\begin{equation}
\cO_{b}(z)= z^{\zeta}(1-z)
\end{equation}
produces a differential equation in $\cO_{a}(z)$, with solution
\begin{equation}
\cO_{a}(z) = -\frac{i}{2}(\zeta-1)z^{\zeta-1}(1-z) \,.
\end{equation} where the integration constant has been set to zero. We then promote these functions of coordinate $z$, back to functions of operators
\begin{equation}
\label{eqn:app_drr_ob_oa_ops}
\cO_{b}(Z)= Z^{\zeta}(1-Z) \,, \qquad \cO_{a}(Z) = -\frac{i}{2}(\zeta-1)Z^{\zeta-1}(1-Z) \,.
\end{equation}
Therefore, under these choices, the operators multiplying $S$ in \eqref{eqn:app_drr_oa_full_exp} and \eqref{eqn:app_drr_od_full_exp2} are equal and we proceed to substitute \eqref{eqn:app_drr_od_full_exp2} into \eqref{eqn:app_drr_oa_full_exp} giving
\begin{equation}
\label{eqn:app_drr_final_rec_in_s_z}
-E\langle (Z(1-Z))^{-1} \alpha_{1}(Z) \rangle + \langle [[S,\cO_{a}(Z)],S]Z(1-Z)\rangle + i \langle [S,\cO_{a}(Z)](2Z-1)  \rangle = 0 \,.
\end{equation}
Inserting $\cO_{a}(Z)$ and $\cO_{b}(Z)$ from \eqref{eqn:app_drr_ob_oa_ops} and evaluating the commutators reveals the final recursion relation
\begin{equation}
\label{eqn:app_drr_final_rec}
\text{\eqnverty } : \quad (\zeta-1)^3 f_{\zeta - 2} +  (2 \zeta-1) \left(2 E - \zeta^2 + \zeta \right) f_{\zeta-1} + \zeta \left(\zeta^2 -4 E  - 1 \right) f_{\zeta}=0\,,
\end{equation}
where $f_{\zeta} \equiv f_{0,\zeta} = \langle Z^{\zeta} \rangle$, with general 2d moments defined as $f_{\sigma, \zeta} = \langle S^{\sigma}Z^{\zeta} \rangle$.
This agrees with the original result in equation \eqref{eqn:boot_1d_rec_in_z} of \S\ref{ss:rec_rels}, and the process outlined here can also be repeated to identify the \eqnhoriy recursion relation exclusively in $\langle S^{\sigma} \rangle $ moments, seen in \eqref{eqn:boot_1d_in_s_hori}.

\newpage
\section{Proof of Dagger Lemma}
\label{app:lemma_proof_dag_remove}
In this appendix we provide the proof of the Dagger lemma introduced in \S\ref{sss:two_op_matrix},
\begin{namedtheorem}[``Dagger"]
For $\beta \geq \alpha \geq 0$, the following is true
\begin{equation}
\label{eqn:app_lemma_rem_dag_eqn}
[Z(1-Z)]^{\beta}(S^{\alpha})^{\dagger} = [Z(1-Z)]^{\beta}S^{\alpha} \,.
\end{equation}
\end{namedtheorem}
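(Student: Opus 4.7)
The approach is to test the operator identity against an arbitrary $\phi \in \cH$ via the inner product and invoke the defining property of the adjoint. Since $[Z(1-Z)]^{\beta}$ is a bounded, self-adjoint multiplication operator, one has
\begin{equation}
\langle \phi \mid [Z(1-Z)]^{\beta} S^{\alpha} \psi\rangle = \langle [Z(1-Z)]^{\beta} \phi \mid S^{\alpha} \psi\rangle.
\end{equation}
The plan is to integrate by parts $\alpha$ times in this last expression so as to move all $\alpha$ derivatives off $\psi$ and onto the composite factor $g := [z(1-z)]^{\beta} \phi$, and then argue that the resulting boundary contributions vanish precisely when $\beta \geq \alpha$. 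Once the boundary terms are gone, the right-hand side equals $\langle S^{\alpha} g \mid \psi\rangle = \langle g \mid (S^{\alpha})^{\dagger}\psi\rangle = \langle \phi \mid [Z(1-Z)]^{\beta}(S^{\alpha})^{\dagger}\psi\rangle$, and the identity follows from the arbitrariness of $\phi$. The $\alpha=0$ case is immediate, so I focus on $\alpha \geq 1$.

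Writing $S = i\partial_{z}$ and applying integration by parts $\alpha$ times produces
\begin{equation}
\langle g \mid S^{\alpha}\psi\rangle = \langle S^{\alpha} g \mid \psi\rangle + i^{\alpha}\sum_{k=0}^{\alpha-1}(-1)^{k}\bigl[\partial_{z}^{k} g^{*} \cdot \partial_{z}^{\alpha-1-k}\psi\bigr]_{0}^{1}.
\end{equation}
By the Leibniz rule, each factor $\partial_{z}^{k} g^{*}$ is a sum of terms of the form $\partial_{z}^{k-\ell}[z(1-z)]^{\beta} \cdot \partial_{z}^{\ell}\phi^{*}$ with $0 \leq \ell \leq k \leq \alpha - 1$. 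The crucial observation is that $[z(1-z)]^{\beta}$ has a zero of order $\beta$ at each endpoint, so any of its derivatives of order $m \leq \beta - 1$ still contains a factor $z^{\beta-m}(1-z)^{\beta-m}$. Under the hypothesis $\beta \geq \alpha$, the derivative order $k-\ell \leq \alpha - 1 \leq \beta - 1$, so every such term retains at least one surviving power of $z(1-z)$ at the boundary.

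The main obstacle is then to check that this surviving factor $z(1-z)$ suffices to kill any blow-up of $\phi^{*}$ or of derivatives of $\psi$ as $z \to 0, 1$. This is where the boundary analysis of \S\ref{ss:the_hilbert_space} is invoked: $\phi \in \cH$ with $[Z(1-Z)]^{\beta}\phi$ in the domain of $S^{\alpha}$ is at worst logarithmically divergent with $\mathcal{O}(z^{1/2})$ subleading behavior per \eqref{eqn:model_norm_cond_on_phi}, while $\psi \in D(H)$ and its derivatives (which for the Hamiltonian $H$ of \eqref{eqn:model_ham_in_s_z} can be controlled recursively through the Schr\"odinger equation \eqref{eqn:model_ham_equation}) are at worst mildly singular at the endpoints. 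A term-by-term estimate then shows that each boundary contribution scales as $z^{\beta-k+\ell}(1-z)^{\beta-k+\ell}$ times at most a $\log$ or power factor with exponent strictly less than $\beta - k + \ell \geq 1$, hence vanishes in the limits. With all boundary terms eliminated, the chain of equalities above yields $[Z(1-Z)]^{\beta}(S^{\alpha})^{\dagger}\psi = [Z(1-Z)]^{\beta} S^{\alpha}\psi$ as claimed.
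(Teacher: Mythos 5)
Your proposal is correct and follows essentially the same route as the paper's Appendix B proof: reduce the claim to the vanishing of the integration-by-parts boundary sum, expand $\partial_z^{k}\bigl(\phi^{*}[z(1-z)]^{\beta}\bigr)$ via the Leibniz rule, and count surviving powers of $z(1-z)$ against the boundary behaviour of $\phi$ and the derivatives of $\psi$ (controlled recursively by the Schr\"odinger equation). The only difference is presentational: where you compress the endgame into a single ``term-by-term estimate'' asserting that the power/$\log$ blow-up has exponent strictly below $\beta-k+\ell$, the paper tracks the inverse powers of $z(1-z)$ introduced by each substitution of the Schr\"odinger equation explicitly and handles the derivative-free summand (which needs $\beta-\lambda\geq 1$ rather than merely $\geq 0$) as a separate case, which is exactly where the bound tightens from $\beta\geq\alpha-1$ to $\beta\geq\alpha$. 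It would strengthen your write-up to state that case distinction explicitly, since that is the point at which $\beta\geq\alpha$ rather than $\beta\geq\alpha-1$ becomes necessary.
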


\begin{proof}
For $\alpha =0$, the statement is trivial so the following considers $\alpha \geq 1$. Take $\phi$ and $\psi$ as arbitrary wavefunctions and calculate
{\small
\begin{equation}
\label{eqn:herm_cond}
\langle S^{\alpha}[Z(1-Z)]^{\beta} \phi| \psi \rangle - \langle \phi| [Z(1-Z)]^{\beta}S^{\alpha}\psi \rangle = \int_{0}^{1}  \left( \left[S^{\alpha}[Z(1-Z)]^{\beta} \phi(z) \right]^{*} - \phi(z)^{*} [Z(1-Z)]^{\beta}S^{\alpha} \right)\psi(z) dz \,,
\end{equation}}Then when \eqref{eqn:herm_cond} vanishes, it implies equation \eqref{eqn:app_lemma_rem_dag_eqn}.
Integrating \eqref{eqn:herm_cond} by parts, and equating to zero gives
\begin{equation}
\label{eqn:boot_bc_bootmat_2d}
-i^\alpha \sum_{\kappa=0}^{\alpha-1} (-1)^\kappa \frac{\partial^\kappa}{\partial z^\kappa} \left[\phi(z)^{*} [z(1-z)]^\beta \right] \cdot \frac{\partial^{\alpha-1-\kappa}}{\partial z^{\alpha-1-\kappa}} \psi(z) \bigg|_{0}^{1} =0 \,.
\end{equation}
The aim is to show that, for an appropriate choice of $\alpha$ and $\beta$, each term in the above sum contains sufficient factors of $z(1-z)$ such that when we apply the boundary conditions (see equations \eqref{eqn:model_bcs} and \eqref{eqn:model_bc_psi}), they vanish.  To start, we set $\partial_{z}^{(\alpha)}\psi(z) \equiv \frac{\partial^{\alpha}}{\partial z^{\alpha}}\psi(z)$ and apply the general Leibniz rule\footnote{Defined by $
\partial_{z}^{(\kappa)} \left(\phi(z)^{*} [z(1-z)]^{\beta} \right) = \sum_{\lambda=0}^{\kappa} \binom{\kappa}{\lambda} \partial_{z}^{(\kappa-\lambda)}\phi(z)^{*} \partial_{z}^{(\lambda)}[z(1-z)]^{\beta} \,.$}
to rewrite equation \eqref{eqn:boot_bc_bootmat_2d} as
\begin{equation}
\label{eqn:boot_bc_bootmat_2d_expanded}
-i^\alpha \sum_{\kappa=0}^{\alpha-1} \sum_{\lambda=0}^{\kappa} (-1)^\kappa \binom{\kappa}{\lambda} \partial^{(\kappa-\lambda)}_{z} \phi(z)^{*} \cdot  \partial_{z}^{(\lambda)} [z(1-z)]^\beta \cdot \partial^{(\alpha-1-\kappa)}_{z} \psi(z) \bigg|_{0}^{1} =0 \,.
\end{equation}
Factors of $z(1-z)$ are introduced either by the derivatives of the wavefunctions $\psi(z)$ and $\phi(z)^{*}$, or by derivatives of $[z(1-z)]^\beta$. $\partial^{(\alpha -1 - \kappa)}_{z}\psi(z)$ and $\partial^{(\kappa-\lambda)}_{z}\phi(z)^{*}$ can be written in terms of zeroth and first derivatives by applying multiple derivatives and substitutions of the Schr{\"o}dinger equation. We have that 
\begin{equation}
\partial_{z}^{(2)}\psi(z) \equiv \psi''(z) = [z(1-z)]^{-1} ((2z-1)\psi'(z) -E\psi(z))\,,
\end{equation}
and hence each subsequent derivative introduces another $[z(1-z)]^{-1}$ factor, e.g. for $\alpha - 1 - \kappa \geq 2$, $\partial_{z}^{(\alpha - 1 - \kappa)}\psi(z)$ introduces $[z(1-z)]^{-\alpha + \kappa +2}$. Therefore, in order to apply the boundary conditions, the exponent of the $z(1-z)$ factor in the summand should be the sum of the derivative exponents\footnote{This ensures there are enough $z(1-z)$ factors that even the $\phi'(z)^{*} \psi'(z)$ terms that appear, vanish under application of the boundary condition.}: $(\alpha-1-\kappa) + (\kappa-\lambda) = \alpha -1-\lambda$. We already have a contribution of $[z(1-z)]^{\beta-\lambda}$ from the $\partial_{z}^{(\lambda)}[z(1-z)]^{\beta}$ term, hence $ \beta - \lambda \geq \alpha -1 - \lambda \implies \beta \geq \alpha-1$. We note that this counting argument is only sufficient when there is at least one derivative present on $\psi$ or $\phi^{*}$. For the term in the summand that has no wavefunction derivatives, the sum of derivative exponents is zero and hence the previous counting argument implies $\beta -\lambda\geq 0$. However, this fails at $\beta-\lambda =0$ as the wavefunctions do not necessarily vanish at the boundaries. To ensure the summand term does vanish, we must introduce an additional $z(1-z)$ factor which leads to the strongest constraint: $\beta \geq \alpha$. By taking $\beta \geq \alpha$, every term in \eqref{eqn:boot_bc_bootmat_2d_expanded} individually vanishes and therefore the lemma equation  \eqref{eqn:app_lemma_rem_dag_eqn} is satisfied.
\end{proof}

\newpage


\section{Finiteness of matrix elements}
\label{app:fme_main_proof}

This section analyses the finiteness of 2d bootstrap matrix elements. This involves placing restrictions on the indices of expectation values of form $\langle S^{\sigma}[Z(1-Z)]^{\zeta} \rangle$. These restrictions also apply to the $f_{\sigma,\zeta} = \langle S^{\sigma}Z^{\zeta}\rangle$ moments since these comprise the matrix elements, see equation \eqref{eqn:boot_oodag_ele_expanded_mccoy}.

The 2d bootstrap matrix $\tilde{\cB}_{2d}$ consists of elements
\begin{align}
\begin{split}
\label{eqn:app_tildb2d_mat_ele_oper}
\left(\tilde{\cB}_{2d}\right)_{(\alpha,\beta),(\alpha',\beta')}
&= \langle S^{\alpha}[Z(1-Z)]^{\beta+\beta'} \left( S^{\alpha'} \right)^{\dagger} \rangle
\end{split}
\end{align}
and here we show that upon taking $\beta \geq \alpha$ and $\beta' \geq \alpha'$, such elements are finite. Note that throughout this section, we explicitly use the energy eigenfunctions $\psi_{a} = \sqrt{2a+1}P_{a}(2z-1)=\sqrt{2a+1}\tilde{P}_{a}(z)$. This is to emphasise that this proof relies on the specific analytic form of the solutions.

To begin, consider the $\tilde{\cB}_{2d}$ bootstrap matrix element, evaluated on eigenfunction $\psi_{a}$
\begin{align}
\label{eqn:app_fn_b2d_mat_ele}
\begin{split}
\left(\tilde{\cB}_{2d}\right)_{(\alpha,\beta),(\alpha',\beta')} &= \bra{\psi_{a}} S^{\alpha} [Z(1-Z)]^{\beta + \beta'} (S^{\alpha'})^{\dagger} \ket{\psi_{a}}
\\
&= \sum_{c=0}^{\infty}  \bra{\psi_{a}} S^{\alpha} [Z(1-Z)]^{\beta} \ket{\psi_{c}}\bra{\psi_{c}} [Z(1-Z)]^{\beta'} (S^{\alpha'})^{\dagger} \ket{\psi_{a}}
\end{split}
\end{align}
where a complete set of energy eigenstates $\sum_{c=0}^{\infty}\ket{\psi_{c}}\bra{\psi_{c}}$ has been inserted centrally. To show that \eqref{eqn:app_fn_b2d_mat_ele} is finite, it suffices to show that at a chosen $a , \alpha, \beta, \alpha', \beta' $, the vectors indexed by $c$,
\begin{equation}
\label{eqn:app_fn_two_vecs_indexed_by_c}
\cI_{c}^{(a,\alpha,\beta)} = \bra{\psi_{a}} S^{\alpha} [Z(1-Z)]^{\beta} \ket{\psi_{c}} \,, \qquad \bar{\cI}_{c}^{(a,\alpha',\beta')} = \bra{\psi_{c}}  [Z(1-Z)]^{\beta'} (S^{\alpha'})^{\dagger} \ket{\psi_{a}}
\end{equation}
only have a finite number of non-zero components. Noting that since $\bar{\cI}_{c}^{a, \alpha',\beta'} = (\cI_{c}^{a,\alpha',\beta'})^{*}$, then providing we apply the same condition between $\alpha'$ and $\beta'$ as we do between $\alpha$ and $\beta$, then we need only show $\cI^{(a,\alpha,\beta)}_{c}$ is finite. We start by exchanging the derivatives onto the $\psi_{a}(z)$ using repeated integration by parts
\begin{align}
\mathcal{I}_{c}^{(a,\alpha,\beta)} &= \bra{\psi_{a}} S^{\alpha} [Z(1-Z)]^{\beta} \ket{\psi_{c}}
\\
&= i^{\alpha} \int^{1}_{0} \psi_{a}(z) \partial_{z}^{(\alpha)}\left( [z(1-z)]^{\beta} \psi_{c}(z)\right) dz
\\
\begin{split}
\label{eqn:app_fn_bt_and_int_cvec}
&= i^{\alpha} \sum_{k=0}^{\alpha-1} (-1)^{k} (\partial_{z}^{(k)}\psi_{a}(z)) \partial_{z}^{(\alpha-1-k)} \left([z(1-z)]^{\beta} \psi_{c}(z) \right) \Bigg|_{0}^{1} 
\\
& \qquad \qquad \qquad \qquad \qquad \qquad   + (-i)^{\alpha} \int_{0}^{1} (\partial_{z}^{(\alpha)}\psi_{a}(z)) [z(1-z)]^{\beta} \psi_{c}(z) dz
\end{split}
\\
\label{eqn:app_fn_deriv_on_state_a}
&=(-i)^{\alpha} \int_{0}^{1} (\partial_{z}^{(\alpha)}\psi_{a}(z)) [z(1-z)]^{\beta} \psi_{c}(z) dz
\\
\label{eqn:app_fn_deriv_three_terms}
&= (-i)^{\alpha}\sqrt{(2a+1)(2c+1)} \int_{0}^{1} \left(\partial_{z}^{(\alpha)} \tilde{P}_{a}(z)\right) [z(1-z)]^{\beta} \tilde{P}_{c}(z) dz
\end{align}
where again, $\tilde{P}_{a}(z) = P_{a}(2z-1)$. Importantly, since $\psi_{a}$ and $\psi_{c}$ are polynomials, the boundary term sum in \eqref{eqn:app_fn_bt_and_int_cvec} will vanish, providing that $\beta \geq \alpha$. This inequality ensures there are sufficient factors of $z(1-z)$ in each term so that when evaluated at $z = 0,1$, they all vanish individually\footnote{Note, for $\alpha = 0$, we do not require this process -- there are no derivatives, so no boundary terms and the integrand will simply be a well-defined polynomial in $z$.}. Binomially expanding $(1-z)^{\beta}$ gives
\begin{equation}
[z(1-z)]^{\beta}  = \sum_{k=0}^{\beta} (-1)^{k} \binom{\beta}{k} z^{\beta + k} \,,
\end{equation}
and by using
\begin{equation}
\tilde{P}_{1}(z) = 2z-1 \implies z = \frac{1}{2}(1+\tilde{P}_{1}(z)) \,,
\end{equation}
we have
\begin{equation}
\label{eqn:css_pol_z_one_min_z_to_beta}
[z(1-z)]^{\beta} = \sum_{k=0}^{\beta} (-1)^{k} \binom{\beta}{k} \left[\frac{1}{2}(1+\tilde{P}_{1}(z))\right]^{\beta + k} = \sum_{k=0}^{\beta} \sum_{\ell=0}^{\beta+k} (-1)^{k}\left(\frac{1}{2} \right)^{\beta + k} \binom{\beta}{k} \binom{\beta +k}{\ell} (\tilde{P}_{1}(z))^{\ell}  \,,
\end{equation}
where $(1+\tilde{P}_{1}(z))^{\beta + k}$ has also been expanded. To evaluate $(\tilde{P}_{1}(z))^{\ell}$ we use the Legendre polynomial product rule \cite{adams_legendre,donnel_trip_leg_p}
\begin{equation}
\label{eqn:css_pol_tilde_prod}
\tilde{P}_{m}(z) \tilde{P}_{n}(z) = \sum_{\ell=|m-n|}^{m+n} \begin{pmatrix}
 m & n & \ell \\
0 & 0 & 0
\end{pmatrix}^2 (2\ell+1) \tilde{P}_{\ell}(z) = \sum_{\ell=|m-n|}^{m+n}  (w_{m,n,\ell})^2 (2\ell+1) \tilde{P}_{\ell}(z)
\end{equation}
where
\begin{equation}
w_{m,n,\ell} := \begin{pmatrix}
 m & n & \ell \\
0 & 0 & 0
\end{pmatrix} = \left[\int_{0}^{1} \tilde{P}_{m}(z)\tilde{P}_{n}(z)\tilde{P}_{\ell}(z)dz \right]^{\frac{1}{2}} \,,
\end{equation}
is the Wigner (3j) coefficient. Applying the rule multiple times, gives general formula
\begin{equation}
\label{eqn:css_pol_multiple_p1}
(\tilde{P}_{1}(z))^{\ell} =
\begin{cases}
\tilde{P}_{0}(z), &  \text{ for } \ell = 0 \\
\tilde{P}_{1}(z), &  \text{ for } \ell = 1 \\
\prod_{r=2}^{\ell} \left(\sum_{i_{r}=|i_{r-1}-1|}^{i_{r-1}+1} (w_{1,i_{r-1},i_{r}})^2 (2i_{r}+1) \right) \tilde{P}_{i_{\ell}}(z) \quad \text{with} \quad i_{1}=1 \,, \quad & \text{ for } \ell \geq 2 \,.
\end{cases}
\end{equation}
Note that $(\tilde{P}_{1}(z))^{0} =1 = \tilde{P}_{0}(z)$. So, we can cast $[z(1-z)]^{\beta} \tilde{P}_{c}(z)$ from the  \eqref{eqn:app_fn_deriv_three_terms} integrand in terms of multiple sums, over the product of two Legendre polynomials by substituting \eqref{eqn:css_pol_multiple_p1} into  \eqref{eqn:css_pol_z_one_min_z_to_beta} and multiplying by $\tilde{P}_{c}(z)$. The last term in the integrand to consider is $\partial_{z}^{(\alpha)} \tilde{P}_{a}(z)$ which can be evaluated\footnote{Note the factor of $2^{\alpha}$ appearing in \eqref{eqn:css_deriv_formula} due to the fact $\tilde{P}_{a}(z)$ is used instead of $P_{a}(z)$.} (see \cite{laurent2017scaling}) as
\begin{equation}
\label{eqn:css_deriv_formula}
\partial_{z}^{(\alpha)} \tilde{P}_{a}(z) \equiv \frac{d^{\alpha}}{dz^{\alpha}}\tilde{P}_{a}(z) = 2^{\alpha} \sum_{m=0}^{\lfloor (a-\alpha)/2 \rfloor} \gamma_{a-\alpha-2m} \tilde{P}_{a-\alpha-2m}(z)
\end{equation}
with the floor expression, $\lfloor (a-\alpha)/2 \rfloor$, meaning the largest integer less than or equal to $(a-\alpha)/2$. $\gamma_{a-\alpha-2m}$ is a recursion coefficient and can be obtained from formula
\begin{align}
\begin{split}
\label{eqn:css_recur_coef_for_deriv}
\gamma_{a-\alpha -2m} &= \frac{2^{\alpha + 2m}(a-\frac{1}{2})^{\underline{\alpha}}(a - m)^{\underline{m}}(a-\alpha -\frac{1}{2})^{\underline{2m}}}{(2m)^{\underline{2m}} (a - \frac{1}{2})^{\underline{m}}}
- \sum_{j=0}^{m-1} \frac{ (2(a - \alpha - m - j))^{\underline{2(m-j)}} }{(2(m-j))^{\underline{2(m-j)}}} \gamma_{a-\alpha-2j}
\end{split}
\end{align}
where $(x)^{\underline{n}}$ represents the falling factorial. Combining these component expressions together, the full formula for $\mathcal{I}_{c}^{(a,\alpha,\beta)}$ may be written as
\begin{align}
\begin{split}
\label{eqn:css_pol_full_formula}
&\cI_{c}^{(a,\alpha,\beta)} = (-i)^{\alpha} \sqrt{(2a+1)(2c+1)} \sum_{k=0}^{\beta} \sum_{\ell=0}^{\beta + k} (-1)^{k} \left(\frac{1}{2}\right)^{\beta+k} \binom{\beta}{k} \binom{\beta +k}{\ell} \times
\\
&  2^{\alpha}  \sum_{m=0}^{\lfloor (a-\alpha)/2 \rfloor}\gamma_{a-\alpha-2m} \int_{0}^{1} \tilde{P}_{a-\alpha-2m}(z) \Bigg[ \delta_{\ell,0} \tilde{P}_{0}(z) + \delta_{\ell,1} \tilde{P}_{1}(z) 
\\
& \qquad \quad +(1-\delta_{\ell,0})(1-\delta_{\ell,1})  \prod_{r=2}^{\ell} \left(\sum_{i_{r}=|i_{r-1}-1|}^{i_{r-1}+1} w_{1,i_{r-1},i_{r}}^2 (2i_{r}+1) \right) \tilde{P}_{i_{\ell}}(z) \Bigg] \tilde{P}_{c}(z) dz \,,
\end{split}
\end{align}
where Kronecker deltas have been introduced according to \eqref{eqn:css_pol_multiple_p1}. The equation is a set of sums over $\{k,\ell,m, i_{2}, \dots, i_{\ell} \}$ and the key terms are the triple integrals of form
\begin{equation}
\label{eqn:css_trip_int_def}
\int_{0}^{1} \tilde{P}_{a-\alpha-2m}(z) \tilde{P}_{q}(z) \tilde{P}_{c}(z) dz = \begin{pmatrix}
a-\alpha-2m & q & c \\
0 & 0 & 0
\end{pmatrix}^2 \equiv (w_{a- \alpha - 2m, q, c})^2 \,,
\end{equation}
where $q \in \{0,1, i_{\ell} \}$. Using \eqref{eqn:css_trip_int_def} to exchange the integrals, the result is determined by the $(3j)$ coefficients
\begin{align}
\begin{split}
\label{eqn:css_pol_full_formula_pt2}
&\cI_{c}^{(a,\alpha,\beta)}= (-i)^{\alpha} \sqrt{(2a+1)(2c+1)} \sum_{k=0}^{\beta} \sum_{\ell=0}^{\beta + k} (-1)^{k} \left(\frac{1}{2}\right)^{\beta+k} \binom{\beta}{k} \binom{\beta +k}{\ell} \times
\\
&  2^{\alpha}  \sum_{m=0}^{\lfloor (a-\alpha)/2 \rfloor}\gamma_{a-\alpha-2m} \Bigg[ \delta_{\ell,0} (w_{a-\alpha-2m,0,c})^2 + \delta_{\ell,1} (w_{a-\alpha-2m,1,c})^2 
\\
& \qquad +(1-\delta_{\ell,0})(1-\delta_{\ell,1}) \prod_{r=2}^{\ell} \left(\sum_{i_{r}=|i_{r-1}-1|}^{i_{r-1}+1} w_{1,i_{r-1},i_{r}}^2 (2i_{r}+1) \right)  (w_{a-\alpha-2m,i_{\ell},c})^2  \Bigg] \,.
\end{split}
\end{align}

Importantly, since $a-\alpha -2m$ and $q$ can only take finite values (that is, these values are either chosen directly, or if they correspond to sum variables, their limits are fixed and finite), then $w_{a-\alpha-2m,q,c}$ is non-zero for only a finite number of values for $c$.  This is due to the Wigner $(3j)$ coefficient's triangle inequality property/selection rule: $w_{a-\alpha-2m,q,c} = 0$ unless
\begin{equation}
\label{eqn:css_wig_sel_rule}
|a-\alpha-2m - q | \leq c \leq a-\alpha-2m + q \,. 	
\end{equation}
In turn, $\mathcal{I}^{(a,\alpha,\beta)}$, as a $c$-component vector for selected $(a,\alpha,\beta)$ with $\beta \geq \alpha$, is restricted to have a finite number of non-zero components, even though $c$ runs from $0$ to $\infty$.  Hence, ensuring that $\beta \geq \alpha$ and $\beta' \geq \alpha'$ means matrix $\tilde{\cB}_{2d}$ consists of finite elements when the associated expectation values are evaluated using the energy eigenstates. When $\alpha = \alpha'$ and $\beta= \beta'$ we see that this also implies the state with index $c$ has a finite norm squared.

The above proof considered the alternatively ordered $\tilde{\cB}_{2d}$ and not $\cB_{2d}$. Following the same arguments, the elements of matrix $\cB_{2d}$ are finite with no additional relation between $\alpha$ and $\beta$ required. The reason for this is that in the $\cB_{2d}$ elements, the derivatives act directly on the chosen state $\psi_{a}$ and not on $\psi_{c}$. This means no boundary terms appear as there is no integration by parts. However, we still need to remove the $S^{\dagger}$ operators from the matrix elements to employ the bootstrap. For $\cB_{2d}$, it means we must take $\beta \geq \alpha$ according to the Dagger lemma of \S\ref{sss:two_op_matrix}, but leaves freedom to choose other $\alpha'$ and $\beta'$ in this case. Such choices are reserved for future investigation.


\section{Upper and lower octant examples}
\label{app:fme_up_low_examples}
Here we provide explicit upper and lower octant examples of the $c$-index vectors discussed in Appendix \ref{app:fme_main_proof}. Let us concentrate on matrix elements $(\tilde{\cB}_{2d})_{(\alpha,\beta), (\alpha',\beta')}$ with $\alpha = \alpha'$, $\beta = \beta'$ for simplicity. Beginning with the lower octant, i.e. $\alpha > \beta$, set $\alpha = 2$, $\beta = 1$ such that
\begin{equation}
(\tilde{\cB}_{2d})_{(2,1), (2,1)} = \sum_{c=0}^{\infty} \braket{\psi_{a}| S^2 Z(1-Z) | \psi_{c}} \braket{\psi_{c}| Z(1-Z)(S^2)^{\dagger}| \psi_{a}} \,.
\end{equation}
To show this leads to an infinite matrix element, it suffices to show that 
\begin{equation}
\cI_{c}^{a,2,1} = \braket{\psi_{a}| S^2 Z(1-Z) | \psi_{c}} \,,
\end{equation}
has infinitely many non-zero components as a $c$-index vector. These components are evaluated using integration by parts twice
\begin{align}
\cI_{c}^{a,2,1} &= \braket{\psi_{a} | S^{2} Z(1-Z) | \psi_{c}}
\\
&=i^2 \sqrt{(2a+1)(2c+1)} \int_{0}^{1} \tilde{P}_{a}(z) \partial_{z}^{2} \left[ z(1-z) \tilde{P}_{c}(z) \right] dz
\\
\begin{split}
\label{eqn:css_inf_example_bt2}
&= -\sqrt{(2a+1)(2c+1)} \Bigg[ \tilde{P}_{a}(z) \partial_{z} [z(1-z) \tilde{P}_{c}(z)] \bigg|_{0}^{1} 
\\
&  \qquad \qquad \qquad - (\partial_{z} \tilde{P}_{a}(z)) z(1-z) \tilde{P}_{c}(z) \bigg|_{0}^{1} + \int_{0}^{1} (\partial_{z}^2 \tilde{P}_{a}(z)) z(1-z) \tilde{P}_{c}(z) dz \Bigg]
\end{split}
\\
\label{eqn:css_inf_example_bt1_remains}
&= -\sqrt{(2a+1)(2c+1)} \Bigg[ \tilde{P}_{a}(z) \partial_{z} [z(1-z) \tilde{P}_{c}(z)] \bigg|_{0}^{1} + \int_{0}^{1} (\partial_{z}^2 \tilde{P}_{a}(z)) z(1-z) \tilde{P}_{c}(z) dz \Bigg] \,.
\end{align}
The second boundary term that appears in \eqref{eqn:css_inf_example_bt2} will vanish for all states $a$ due to the $z(1-z)$ factor (using knowledge that $\tilde{P}_{a}(z)$ are polynomials for all $a$). However, the remaining boundary term and integral generally do not vanish. The integral can contribute a number of non-zero components to the vector, but as argued previously from \eqref{eqn:app_fn_deriv_three_terms} onwards, the number is finite. On the other hand, there are insufficient $z(1-z)$ factors in the remaining boundary term to ensure that every $c$-component contribution it provides, will vanish. From the polynomial form of $\tilde{P}_{a}$ and $\tilde{P}_{c}$, for any $a$, the vector $\cI_{c}^{a,2,1}$ can therefore potentially contain infinitely many non-zero components. If we further specify state $a=1$, we can observe such a vector:
\begin{equation}
\label{eqn:app_fn_lower_c_vec}
\cI^{1,2,1} =
\begin{pmatrix}
0\,, & 6\,, & 0\,, & 2\sqrt{21}\,, & 0\,, & 2\sqrt{33}\,, & 0\,, & 6\sqrt{5}\,, & \dots
\end{pmatrix}
\end{equation}
which continues indefinitely, owing to the infinite range of $c$.

For the upper octant example, choose $\alpha = 1, \beta = 1$, to give element
\begin{equation}
\label{eqn:app_fn_cent_insert_up}
(\tilde{\cB}_{2d})_{(1,1), (1,1)} = \sum_{c=0}^{\infty} \braket{\psi_{a}| S Z(1-Z) | \psi_{c}} \braket{\psi_{c}| Z(1-Z)S^{\dagger}| \psi_{a}} \,.
\end{equation}
Following the same process as above
\begin{align}
\cI^{a,1,1} &= \braket{\psi_{a} | S Z(1-Z) | \psi_{c}}
\\
&=i \sqrt{(2a+1)(2c+1)} \int_{0}^{1} \tilde{P}_{a}(z) \partial_{z} \left[ z(1-z) \tilde{P}_{c}(z) \right] dz
\\
&= i \sqrt{(2a+1)(2c+1)} \left[ \tilde{P}_{a}(z) z(1-z) \tilde{P}_{c}(z) \bigg|_{0}^{1} - \int_{0}^{1} \left(\partial_{z} \tilde{P}_{a}(z) \right) z(1-z) \tilde{P}_{c}(z) dz \right]
\end{align}
Clearly in this case the boundary term will vanish for all $a$ and $c$ while again, as shown from equation \eqref{eqn:app_fn_deriv_three_terms} onwards, the remaining integral will contribute a finite number of non-zero vector components for a chosen $a$. The resulting vector with state $a=1$ for example is
\begin{equation}
\label{eqn:app_fn_upper_oct_c_vec}
\cI^{1,1,1} =
\begin{pmatrix}
-\frac{i}{\sqrt{3}}\,, & 0\,, & \frac{i}{\sqrt{15}}\,, & 0\,, & 0\,, & 0\,, & 0\,, & 0\,, & \dots
\end{pmatrix}\,.
\end{equation}


\section{Regularising matrix elements}
\label{app:fme_regularise}
The examples of Appendix \ref{app:fme_up_low_examples} showed how the vector calculations imply that the corresponding matrix elements can blow up, depending on octant choice. Here we discuss the regularisation of such matrix elements. In the lower octant case, the vector of \eqref{eqn:app_fn_lower_c_vec} leads to an infinite sum for the matrix element
\begin{align}
(\tilde{\cB}_{2d})_{(2,1), (2,1)} &=
\sum_{c=0}^{\infty}\cI_{c}^{1,2,1}(\cI_{c}^{1,2,1})^{*}
\\
&= 0 + 36 + 0 + 84 + 0 + 132 + 0 + 180 + \dots
\\
&= 6\sum_{n=1}^{\infty} (2n-1)\left(1+(-1)^{n}\right)
\end{align}
where $\cI_{c}^{(a,\alpha,\beta)}$ is as defined in \eqref{eqn:app_fn_two_vecs_indexed_by_c}. By inserting a regularisation factor $e^{-\epsilon n}$ into the summand and series expanding about $\epsilon = 0$, the sum can be regularised
\begin{equation}
\label{eqn:app_fn_low_oct_reg_sum_central}
(\tilde{\cB}_{2d})_{(2,1), (2,1)} = \frac{12}{\epsilon^2} - \frac{6}{\epsilon}+ 2 - 2 \epsilon + \frac{4 \epsilon^2}{5} + \cO(\epsilon^3) \implies (\tilde{\cB}_{2d}^{\text{reg}})_{(2,1), (2,1)} = 2 \,.
\end{equation}
This was achieved by removing the $\epsilon^{-1}, \epsilon^{-2}$ divergences and then taking $\epsilon \to 0$. 

The proof in Appendix \ref{app:fme_main_proof} and examples of Appendix \ref{app:fme_up_low_examples} considered a central insertion of the complete set of states. Trialling other positions of insertion shows that the upper octant $c$-vectors can also become infinite. However, it appears that while the upper octant regularised matrix elements are consistent with any state insertion position, the lower octant elements produce different/inconsistent results. For example, by inserting the complete set of states in the $a = 1, \alpha=1, \beta =1$ upper octant case, between $SZ^2$ and $(1-Z)^2 S^{\dagger}$ instead, we have\footnote{These calculations were achieved using the analytic solution integration.}
\begin{align}
(\tilde{\cB}_{2d})_{(1,1), (1,1)} &= \sum_{c=0}^{\infty} \braket{\psi_{1}| S Z^2 | \psi_{c}} \braket{\psi_{c}| (1-Z)^2 S^{\dagger}| \psi_{1}}
\\
&=\frac{1}{3}  -4 + \frac{196}{15} -21 + 27  -33 + 39  -45 + 51  -57 + 63 + \dots
\\
&= \frac{1}{3}  -4 + \frac{196}{15} + \sum_{n=1}^{\infty} (-1)^{n} (6n+15)
\end{align}
and upon regularising the sum using the $e^{-\epsilon n}$ factor as above, we obtain
\begin{align}
(\tilde{\cB}_{2d})_{(1,1), (1,1)}
&=\frac{1}{3}  -4 + \frac{196}{15}  -9 + \frac{15 \epsilon}{4} + \frac{3 \epsilon^2}{8} + \cO(\epsilon^3)
\\
(\tilde{\cB}_{2d}^{\text{reg}})_{(1,1), (1,1)} &= \frac{2}{5} \,.
\end{align}
where we have taken $\epsilon \to 0$, in the last line. This result is in agreement with the central insertion example \eqref{eqn:app_fn_cent_insert_up}, as seen by evaluating the inner product of $\cI^{(1,1,1)}$ from \eqref{eqn:app_fn_upper_oct_c_vec}: $\sum_{c=0}^{\infty} \cI_{c}^{1,1,1}(\cI_{c}^{1,1,1})^{*} = \frac{2}{5}$. Repeating a similar exercise for the lower octant example $a=1, \alpha = 2, \beta =1$, we trial an insertion of form
\begin{align}
(\tilde{\cB}_{2d})_{(2,1), (2,1)}  &= \sum_{c=0}^{\infty} \braket{\psi_{1}| S^2 Z^2 | \psi_{c}} \braket{\psi_{c}|(1-Z)^2 (S^2)^{\dagger}| \psi_{1}}
\\
&=  0 + 36 -540 + 3024 - 10800 + 29700 + \dots
\\
\label{eqn:app_fn_inf_sum_low_oct}
&=\sum_{n=1}^{\infty} (-1)^{n} (-3 n^2 + 12n^3 -15 n^4 + 6 n^5)
\\
&= -\frac{27 \epsilon}{8} + \frac{27 \epsilon^2}{16} + \cO(\epsilon^3) \,.
\end{align}
where the last line applies the series expansion around $\epsilon$, after inserting regularisation factor $e^{-\epsilon n}$ in \eqref{eqn:app_fn_inf_sum_low_oct}. Hence, upon taking $\epsilon\to 0$ we have
\begin{equation}
(\tilde{\cB}_{2d}^{\text{reg}})_{(2,1), (2,1)}  = 0\,,
\end{equation}
which contradicts the previous result in \eqref{eqn:app_fn_low_oct_reg_sum_central}. Such outcomes can naturally lead to inconsistent positivity calculations, which is why the positive upper octant was used for the 2d matrix calculations.

\section{Slack variable method}
\label{app:slack_var_method}
Here we provide an alternative method to search the $E$ space, referred to as the slack variable method. In the context of the quantum mechanical bootstrap, this approach was introduced in \cite{Berenstein:2022unr}, and here we provide a brief overview of the algorithm. Given the bootstrap matrices are Hermitian by construction, to satisfy $\cB \succeq 0$ it suffices that the minimal (smallest) eigenvalue of $\cB$ is positive. In the context of optimisation, the objective is to
\begin{equation}
\text{maximise } \lambda_{\text{min}}[\cB(E)] \,.
\end{equation}
Here $\cB$ is the $K \times K$ bootstrap matrix, and its eigenvalues $\lambda_{i}$ with $i = 1, \dots, K$, depend on initial data $E$. If the optimised/maximised minimal eigenvalue $\lambda_{\text{min}}$ is negative, the initial data $E$ is rejected. Equivalently, this optimisation problem can be phrased using a slack variable $t$
\begin{align}
\label{eqn:numres_slack_var_constraint}
\begin{split}
&\text{maximise } \qquad t \,,
\\
&\text{subject to } \qquad \cB(E)- t \mathbb{I} \succeq 0 \,.
\end{split}
\end{align}
This is a semidefinite programming problem in linear matrix inequality form, where the only initial/primal variable is $t$. In this description, at any given $E$ the method can always identify an optimal $t$ such that constraint \eqref{eqn:numres_slack_var_constraint} is satisfied. Since these optimal $t_{*}$ values depend continuously on $E$, we can use this dependence as an indicator of where the physical energies exist. Regions of $E$ for which $t_{*} \geq 0$ indicate $\cB \succeq 0$, and conversely, $t_{*} <0$ indicates $\cB \nsucceq 0$. This method is particularly useful for a large initial data space, but as will be shown, it still performs well for the single data  $E$ here, and supports the findings of the step search approach.

We apply the slack variable method to the $\cB_{1d}$, $\cB_{2d}''$ and $\tilde{\cB}_{2d}''$ matrices, and comment on their results. Each figure plots $\log{|t_{*}|}$ vs. $E$ for matrix sizes $K=2,4,6,8$, and  use step size $10^{-2}$ and energy range $E \in [0,50
]$ as before. The energy eigenvalues of the system are displayed in the figures as black dashed lines and the singularities of the $K=8$ bootstrap matrices are presented by red dashed lines. Taking the log of $t_{*}$ allows the significant behaviour to be seen more clearly, primarily the \textit{inverted spike} (using the name assigned in \cite{Berenstein:2022unr}) behaviour appearing around the eigenvalues.

The $\cB_{1d}$ plot is given in Figure \ref{fig:b1dslack}. A clear example of the inverted spike behaviour occurs for the $K=4,6,8$ curves, around $E=2$ (second black dashed line), where an arch is bound by two inverted spikes: the negative $\log{|t_{*}|}$ values with large magnitude. The width of the arch reduces with increasing $K$, i.e. the spikes become closer together. This is equivalent to a reduction in size of the band in the step search. The $K=2$ curve does not feature inverted spikes around $E=2$, since the local energy band could not be constrained at this matrix size. The final feature is the sharp spike at the singularities. $\cB_{1d}$ evaluated at energies close to these singularities, yield large eigenvalues. This implies that $|t_{*}|$ becomes (relatively) large in order to ensure $\cB_{1d} -t_{*} \mathbb{I} \succeq 0$.

\begin{figure}[t!]
\begin{center}
\includegraphics[width=15cm]{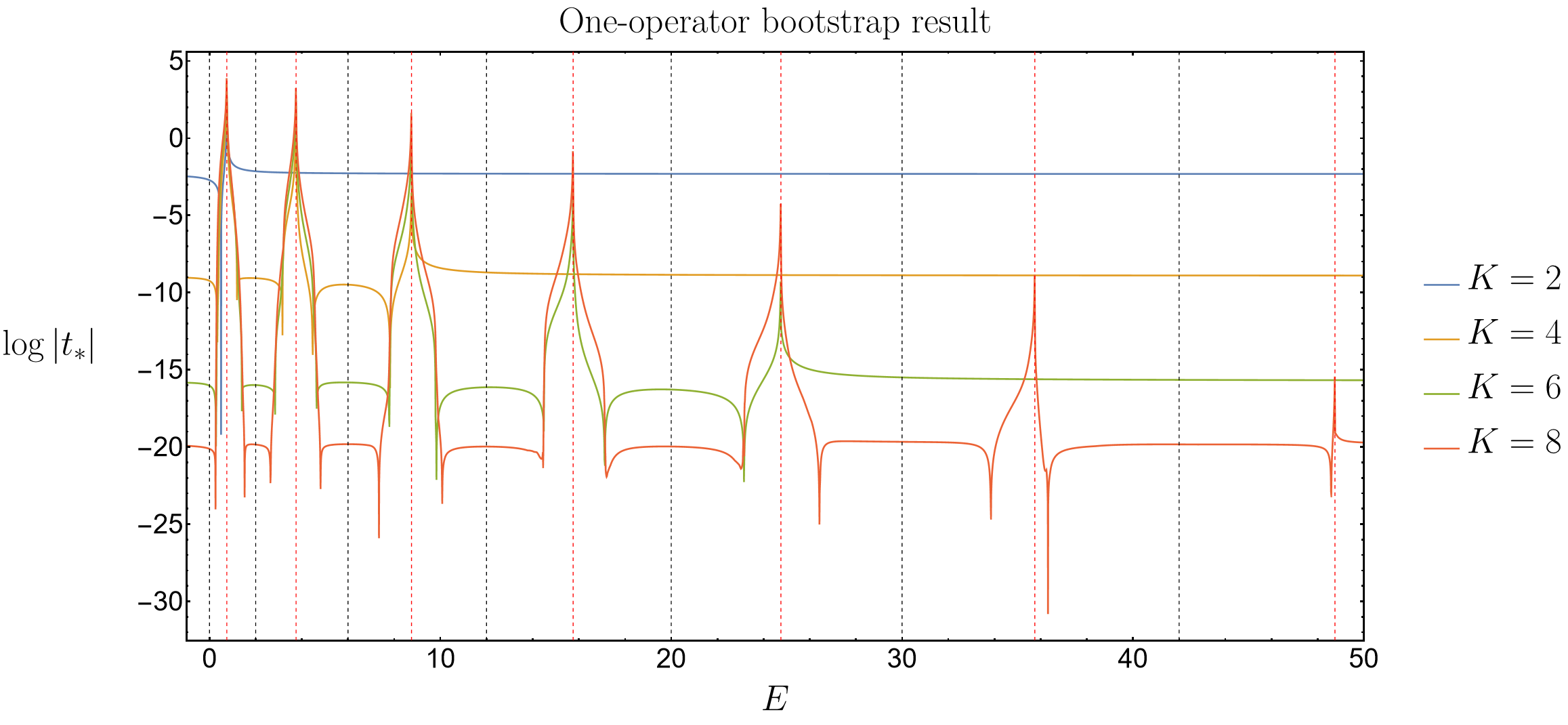}
\end{center}
\caption{Plotting the log of the optimal slack variable $t_{*}$ vs. energy $E$ for $\cB_{1d}$, at matrix sizes $K = 2, 4, 6, 8$. Pairs of inverted spikes appear around energy eigenvalues. At $E=2$ we see the distance between neighbouring spikes decrease from $K=4$ to $K=8$, corresponding to the shrinking of the energy band. The plot also indicates the results for small negative $E$.}
\label{fig:b1dslack}
\end{figure}

The $\cB_{2d}''$ plot of Figure \ref{fig:b2dpp_slack} exhibits different behaviour. The inverted spikes  predominantly appear close to the singularities. Only at $E=2$ do the spikes become somewhat more distinct; moving slightly away from these singularities. This agrees with Figure \ref{fig:b2dpp_step}, showing how the bootstrap performed poorly in constraining the energy into bands compared to the $\cB_{1d}$ case.
\begin{figure}[t!]
\begin{center}
\includegraphics[width=15cm]{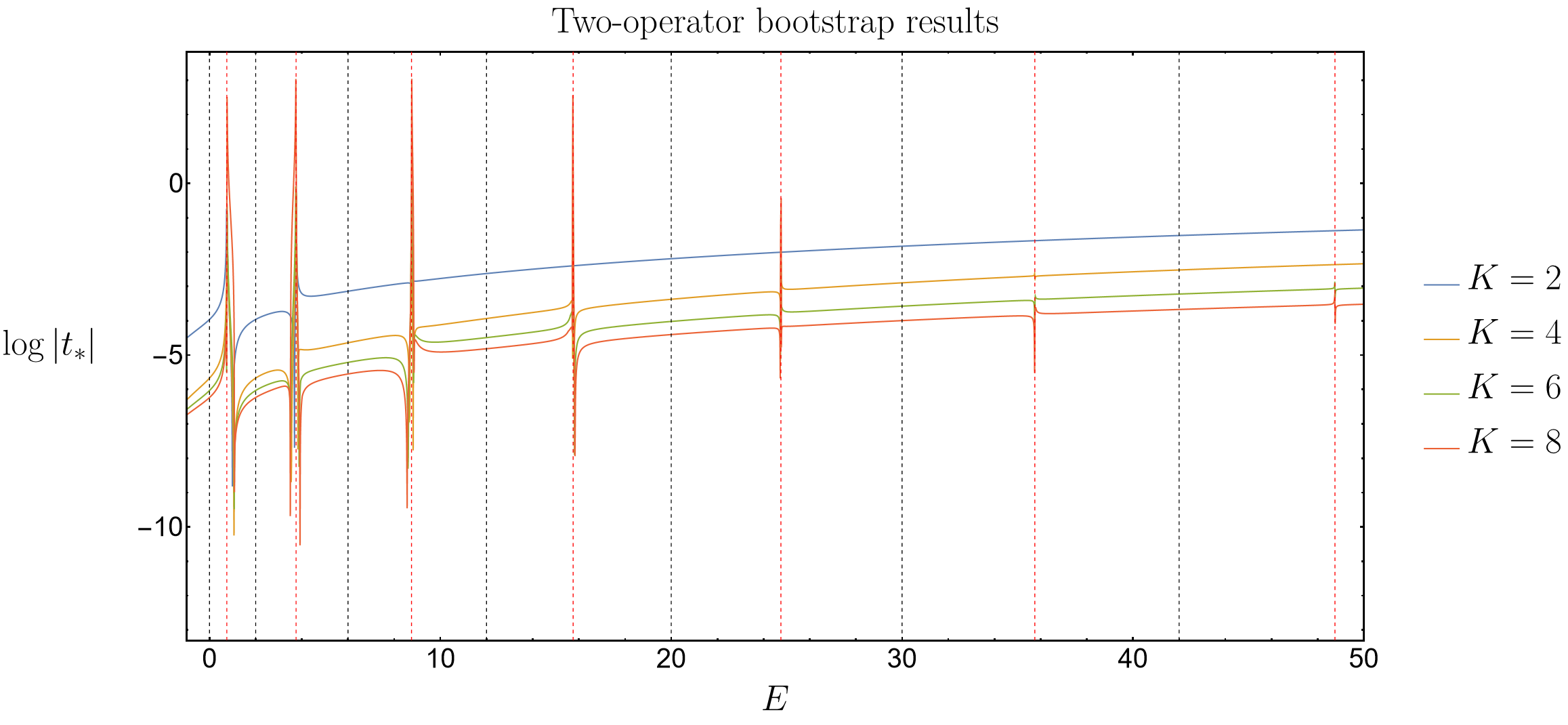}
\end{center}
\caption{$\log{|t_{*}|}$ vs. $E$ for $\cB_{2d}''$ at matrix sizes $K=2,4,6,8$. The inverted spike behaviour occurs close to the singularities (red dashed) and shows that two neighbouring spikes do not significantly approach each other as $K$ increases. This demonstrates the difficulty in constraining the energies for this matrix.}
\label{fig:b2dpp_slack}
\end{figure}

Finally, we consider the $\tilde{\cB}_{2d}''$ results in Figure \ref{fig:tilb2dpp_slack}. Here we see significantly negative $\log{|t_{*}|}$ values occurring at specific points in the $E$ space. This is where two spikes join and become indistinguishable (the arch between becomes point like), converging on a single energy eigenvalue. For example, the $K=6$ curve appears to show joined inverted spikes at the first five energy eigenvalues, followed by pairs of spikes enclosing $E=30, 42$ corresponding to energy bands at this $K$. On the other hand the $K=8$ curve finds all energy eigenvalues for $E \in [0,50]$. The fact that $\log{|t_{*}|}$ is negative with large magnitude at these joined-spike points is indicative of a $t=0$ crossing, implying $\tilde{\cB}_{2d}'' \succeq 0$ is effectively satisfied. 

\begin{figure}[t!]
\begin{center}
\includegraphics[width=15cm]{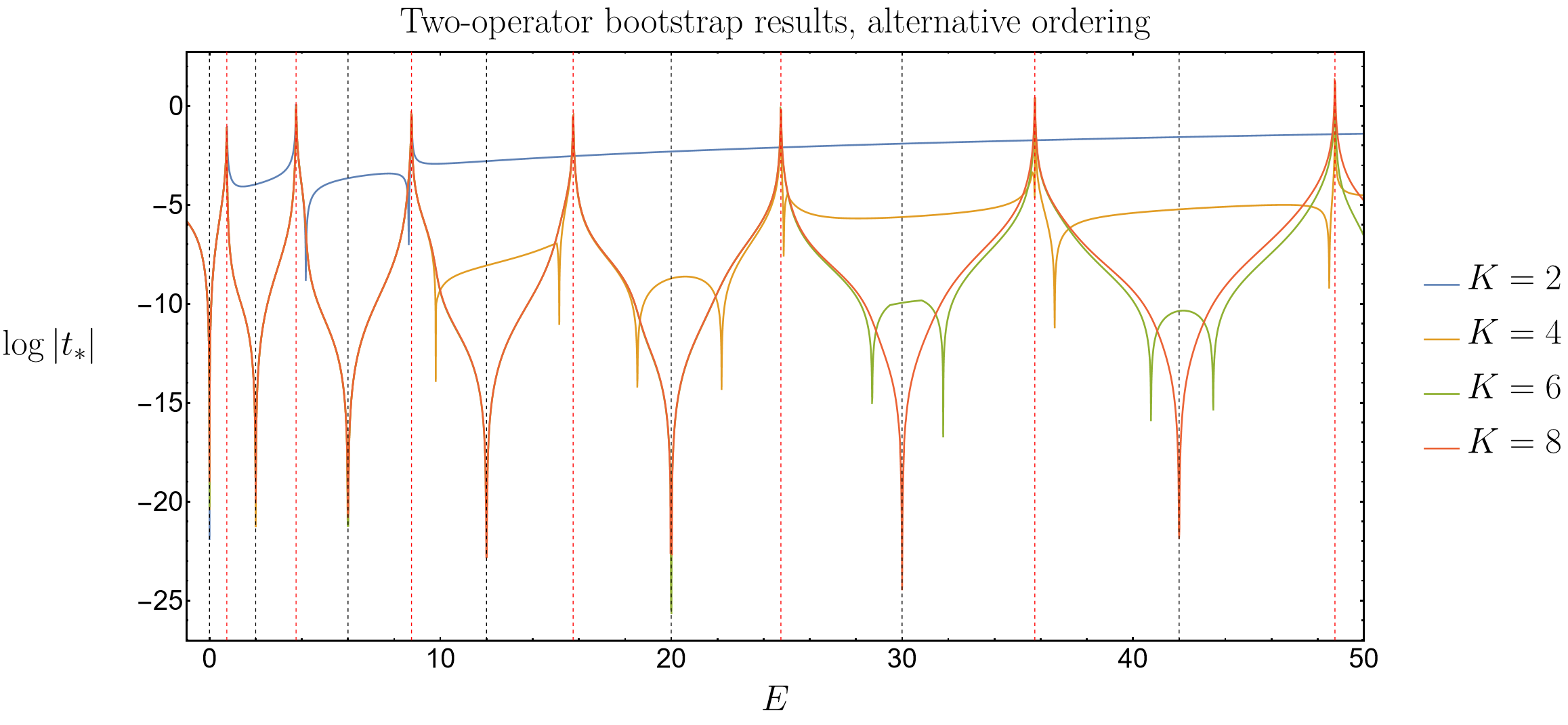}
\end{center}
\caption{$\log{|t_{*}|}$ vs. $E$ for $\tilde{\cB}_{2d}''$ at matrix sizes $K=2,4,6,8$. This plot features inverted spikes that surround exact energy eigenvalues e.g. $K=6$ at $E=30,42$, but also spikes that have joined together e.g. $K=8$ at $E=30,42$. The converging of two neighbouring spikes implies the location of an energy eigenvalue. The $K=8$ curve is able to find all seven energy eigenvalues in the range presented. }
\label{fig:tilb2dpp_slack}
\end{figure}

All numerics were generated in Mathematica \cite{Mathematica} and the slack variable calculations were obtained using the ``SemidefiniteOptimization" function, with method option ``CSDP". We minimised over $-t$, as opposed to the alternative convention of maximising over $t$. Since this function is limited to machine precision, to identify larger eigenvalues with a higher precision using the slack variable search, alternative programs should be considered. As a diagnostic tool however, it is sufficient for the present workings.



\printbibliography
\end{document}